\documentclass[a4paper,reqno]{amsart}
\usepackage{amsfonts}
\usepackage{tikz}
\usepackage{tkz-graph}
\usepackage{accents}
\usepackage{array,calc} 
\usepackage{stmaryrd} 
\usepackage{pgf}
\usetikzlibrary{shapes,cd,arrows,automata,decorations.pathreplacing,angles,quotes,calc}
\usepackage{tkz-euclide}
\usepackage{amsmath,calc,graphicx}
\usepackage{url}
\usepackage[hidelinks]{hyperref}
\usepackage{amssymb}
\usepackage{amsthm}
\usepackage{multirow}
\usepackage{float}
\usepackage{enumerate}
\usepackage{enumitem}
\setcounter{MaxMatrixCols}{20}
\usepackage{amsrefs}
\usepackage{comment}
\usepackage{todonotes}
\numberwithin{equation}{section}
\numberwithin{figure}{section}
\theoremstyle{plain}

\newtheorem{theorem}{Theorem}[section]
\newtheorem{lemma}[theorem]{Lemma}

\newtheorem{proposition}[theorem]{Proposition}
\newtheorem{definition}[theorem]{Definition}

\theoremstyle{remark}
\newtheorem{remark}[theorem]{Remark}

\newtheorem*{lem*}{\textsc{Lemma}}
\newtheorem*{cor*}{\textsc{Corollary}}
\newtheorem*{exer*}{\textsc{Exercise}}
\newtheorem*{con*}{\textsc{Conjecture}}
\newtheorem*{thm*}{\textsc{Theorem}}

\newcommand{\beq}{\begin{equation}}
\newcommand{\eeq}{\end{equation}}

\makeatletter
\newcommand*\rel@kern[1]{\kern#1\dimexpr\macc@kerna}
\newcommand*\widebar[1]{%
  \begingroup
  \def\mathaccent##1##2{%
    \rel@kern{0.8}%
    \overline{\rel@kern{-0.8}\macc@nucleus\rel@kern{0.2}}%
    \rel@kern{-0.2}%
  }%
  \macc@depth\@ne
  \let\math@bgroup\@empty \let\math@egroup\macc@set@skewchar
  \mathsurround\z@ \frozen@everymath{\mathgroup\macc@group\relax}%
  \macc@set@skewchar\relax
  \let\mathaccentV\macc@nested@a
  \macc@nested@a\relax111{#1}%
  \endgroup
}
\makeatother

\setcounter{tocdepth}{2}

\newcommand\Psix{\textrm{P}_{\textrm{VI}}}

\raggedbottom
\newcommand{\orcidauthorA}{0000-0001-7504-4444}
\newcommand{\orcidauthorB}{0000-0002-0461-7580}

\setcounter{tocdepth}{1}


\title{On the crystal limit of the q-difference sixth Painlev\'e equation}
\date{}
\thanks{This research was supported by Australian Research
Council Discovery Project \#DP210100129.}
\author[Nalini Joshi]{Nalini Joshi}
\thanks{NJ's ORCID ID is \orcidauthorA.}
\address{School of Mathematics and Statistics F07, The University of Sydney, NSW 2006, Australia}
\author{Pieter Roffelsen}
\thanks{PR's ORCID ID is \orcidauthorB.}
\email{nalini.joshi@sydney.edu.au}
\email{pieter.roffelsen@sydney.edu.au}
\subjclass[2020]{39A13, 33E17,34M50,39A45,47B39,14J26}


\begin{document}
\begin{abstract}
We consider the Riemann-Hilbert correspondence associated with the $q$-difference sixth Painlev\'e equation in the crystal limit, i.e. $q\to0$, and show two main results. First, the limit of this generically highly transcendental mapping is shown to exist. Second, we show that the limiting map is bi-rational and describe it explicitly.
\end{abstract}
\maketitle


\section{Introduction}
Crystal limits arise in the theory of quantum groups \cite{kashiwara}, which are associated with solvable lattice models in quantum statistical mechanics. As the latter setting gives rise to integrable systems, it is natural to ask whether the combinatorial aspects of the theory of quantum groups, which correspond to the crystal limit $q\to0$, also extend to  integrable $q$-difference equations.

In this paper, we consider this question through the Riemann-Hilbert correspondence associated with a $q$-difference Painlev\'e equation. 
For differential Painlev\'e equations, the study of the Riemann-Hilbert correspondence as $t$ approaches a critical point of the equation has been a major focus of attention in the field \cites{fokasitsbook,itsnovokbook,jimbo82}. However, for $q$-difference  Painlev\'e equations, there are additional limits of interest, in particular in the parameter $q$. We focus on $q\to0$, the crystal limit, here and surprisingly find not only that the limit of the correspondence exists, but also that it is explicitly realised by a bi-rational mapping. 

To be specific, we focus on the $q$-difference sixth Painlev\'e equation (or $q\Psix$). Given $q\in\mathbb C$, $0<|q|<1$, and $\kappa=(\kappa_0,\kappa_t,\kappa_1,\kappa_\infty)\in (\mathbb C^*)^4$, this equation is given by
\begin{align}\label{eq:qpvi}
q\Psix:\ \begin{cases}\  \underline{f}f=\dfrac{(g-\kappa_0\,t)(g-\kappa_0^{-1}t)}{(g-q\,\kappa_\infty)(g-\kappa_\infty^{-1})}, &\\
  \   g\overline{g}=q\dfrac{(f-\kappa_t\,t)(f-\kappa_t^{-1}t)}{(f-\kappa_1)(f-\kappa_1^{-1})}, &
     \end{cases} 
\end{align}
where $f=f(t)$, $g=g(t)$, $\underline{f}=f(t/q)$, $\overline{g}=g(qt)$. The equation is due to Jimbo and Sakai \cite{jimbosakai}\footnote{If the latter's solutions are denoted $(f_{\text{\tiny JS}}, g_{\text{JS}})$, then we have taken $f=f_{\text{\tiny JS}}$ and $g=q\,g_{\text{\tiny JS}}$ and further normalised the equation as described in \cite{jr_qp6}.}
who showed that the mathematical properties of $q\Psix$ closely resemble those of the celebrated sixth Painlev\'e differential equation $\Psix$. It is well known that $q\Psix$ leads to $\Psix$ in the continuum limit $q\to1$.  

For both $q\Psix$ and $\Psix$, with generic parameters, the Riemann-Hilbert correspondence can be realised as a bi-holomorphic mapping between an initial value space and an associated monodromy manifold \cites{inaba2006,rofqpviasymptotics}. For $\Psix$, as well as the other differential Painlev\'e equations, this manifold is an affine cubic surface \cites{jimbo82,putsaito}, while for $q\Psix$ it is an affine Segre surface \cite{jr_qp6}. 
Recently, we showed that all differential Painlev\'e equations have monodromy manifolds given by affine Segre surfaces
\cite{jmr_segre}. 

These earlier results were obtained by taking the continuum limit of $q\Psix$, combined with confluence limits of the Painlev\'e equations. In this paper, we consider a very different limit ($q\to0$). Our main result is Theorem \ref{thm:main}, which shows that the Riemann-Hilbert correspondence becomes a bi-rational mapping under this limit.

\subsection{Outline}
The paper is organised as follows. In \S\ref{s:ivs_lin} we study the crystal limit of the initial value space of $q\Psix$ and its associated linear $q$-difference problem. In \S\ref{s:segre_RHP}, we consider the crystal limit of the Riemann-Hilbert correspondence and associated Segre surface. Finally, we end the paper with a conclusion in \S\ref{s:conc}.


\subsection{Notation}
Throughout the paper, the crystal limit of any given object $X$, say, that depends on $q$, is denoted by a superscript diamond. That is, \begin{equation*}
    X\xrightarrow{q\rightarrow 0} X^\diamond.
\end{equation*}
All manifolds and varieties in this paper are over $\mathbb{C}$, and we will write $\mathbb{P}^n$ for $n$-dimensional complex projective space. Furthermore, we denote by $\sigma_3$ the Pauli spin matrix
\begin{equation*}
    \sigma_3=\begin{bmatrix}
        1 & 0\\
        0 & -1
    \end{bmatrix}.
\end{equation*}

\section{Crystal limit of the initial value space and linear problem}\label{s:ivs_lin}

In this section, we study the the initial value space and the linear problem of $q\Psix$ in the limit $q\to0$. The limit of the initial value space is considered in Section \ref{subsec:initialvalue}. This is followed by Section \ref{sec:linear_system_geom}, where the geometry of the linear system associated with $q\Psix$ is studied. In Section \ref{sec:canonical_sol}, we determine the crystal limits of canonical solutions of the linear system. These canonical solutions give rise to a connection matrix that lies in a monodromy manifold. The limit of the connection matrix is described in Section \ref{sec:connection_matrix}.

We start by describing conditions on the parameters relevant in the Riemann-Hilbert approach to $q\Psix$, see \cite{jr_qp6}, and what they are replaced by in the crystal limit. First, we have the {\em non-resonance} conditions given by
\begin{equation}\label{eq:non_res}
\kappa_0^2,\kappa_t^2,\kappa_1^2,\kappa_\infty^2\notin q^\mathbb{Z},\qquad (\kappa_t\kappa_1)^{\pm 1},
(\kappa_t/\kappa_1)^{\pm 1}\notin t\, q^\mathbb{Z}.
\end{equation}
Second, the {\em non-splitting} conditions are
\begin{subequations}\label{eq:intro_irreducibleparameter}
\begin{align}
&\kappa_0^{\epsilon_0}\kappa_t^{\epsilon_t} \kappa_1^{\epsilon_1} \kappa_\infty^{\epsilon_\infty}\notin q^\mathbb{Z},\label{eq:intro_irreducibleparameter1}\\
&\kappa_0^{\epsilon_0} \kappa_\infty^{\epsilon_\infty}\notin t\,q^\mathbb{Z},\label{eq:intro_irreducibleparameter2}
\end{align}
\end{subequations}
where $\epsilon_j\in\{\pm 1\}$, $j=0,t,1,\infty$.

In the limit $q\rightarrow 0$, we replace these conditions by
\begin{equation}\label{eq:non_rescrystal}
\kappa_0^2,\kappa_t^2,\kappa_1^2,\kappa_\infty^2\neq 1,\qquad (\kappa_t\kappa_1)^{\pm 1},
(\kappa_t/\kappa_1)^{\pm 1}\neq t,
\end{equation}
and
\begin{subequations}\label{eq:intro_irreducibleparametercrystal}
\begin{align}
&\kappa_0^{\epsilon_0}\kappa_t^{\epsilon_t} \kappa_1^{\epsilon_1} \kappa_\infty^{\epsilon_\infty}\neq 1,\label{eq:intro_irreducibleparameter1crystal}\\
&\kappa_0^{\epsilon_0} \kappa_\infty^{\epsilon_\infty}\neq t,\label{eq:intro_irreducibleparameter2crystal}
\end{align}
\end{subequations}
where $\epsilon_j\in\{\pm 1\}$, $j=0,t,1,\infty$. We will assume conditions \eqref{eq:non_rescrystal} and \eqref{eq:intro_irreducibleparametercrystal} throughout the paper.

\subsection{Crystal limit of the initial value space}\label{subsec:initialvalue}
The initial value space of $q\Psix$ is obtained by blowing up the compact space $\{(f, g)\in\mathbb P^1\times \mathbb P^1\}$ at the 8 base points \cite{s:01}
\begin{equation}
    \begin{aligned}
    &b_1=(0,\kappa_0^{+1}t), & &b_3=(\kappa_t^{+1}t,0), & &b_5=(\kappa_1^{+1},\infty), & &b_7=(\infty,\kappa_\infty^{-1}),\\
    &b_2=(0,\kappa_0^{-1}t), & &b_4=(\kappa_t^{-1}t,0), & &b_6=(\kappa_1^{-1},\infty), & &b_8=(\infty,q\,\kappa_\infty^{+ 1}).
\end{aligned}\label{eq:basepoints}
\end{equation}
Denoting the resulting space by $\overline{\mathcal{X}}_t$, the dynamical system \eqref{eq:qpvi} lifts to an isomorphism from $\overline{\mathcal{X}}_t$ to $\overline{\mathcal{X}}_{qt}$. The monodromy mapping constructed in \cite{jr_qp6} assumes that solutions of $q\Psix$ take at least one value in $\mathbb{C}^*\times \mathbb{C}^*$. We are therefore led to the following definition.
\begin{definition}\label{def:initial_space}
  The initial value space of $q\Psix$ is defined as the open surface
\begin{equation*}
    \mathcal{X}_t:=\overline{\mathcal{X}}_t\setminus D_t,
\end{equation*}  
where $D_t$ is the union of the strict transforms of the curves $f=0$, $f=\infty$, $g=0$ and $g=\infty$.
\end{definition}

Denote by $E_k$ the exceptional line corresponding to $b_k$ in $\overline{\mathcal{X}}_t$, $1\leq k\leq 8$. 
The parts of the exceptional lines away from $D_t$ are explicitly parametrised by
\begin{equation}\label{eq:exceptional_para}
    E_k=\{v_k\in\mathbb{C}:u_k=0\}\quad (1\leq k\leq 8),
\end{equation}
where each of the pairs of coordinates $(u_k,v_k)$, $1\leq k\leq 8$, comes from a bi-rational change of variables,

\begin{align*}
   &\begin{cases}
        f=u_1, &\\
        g=\kappa_0\,t+u_1\,v_1, &
    \end{cases} &
    &\begin{cases}
        f=u_2, &\\
        g=\kappa_0^{-1}\,t+u_2\,v_2, &
    \end{cases}\\
  &\begin{cases}
        f=\kappa_t\,t+u_3\,v_3, &\\
        g=u_3, &
    \end{cases} &
    &\begin{cases}
        f=\kappa_t^{-1}\,t+u_4\,v_4, &\\
        g=u_4, &
    \end{cases}\\
      &\begin{cases}
        f=\kappa_1+u_5\,v_5, &\\
        g=u_5^{-1}, &
    \end{cases} &
    &\begin{cases}
        f=\kappa_1^{-1}+u_6\,v_6, &\\
        g=u_6^{-1}, &
    \end{cases}\\
      &\begin{cases}
        f=u_7^{-1}, &\\
        g=\kappa_\infty^{-1}+u_7\, v_7, &
    \end{cases} &
    &\begin{cases}
        f=u_8^{-1},&\\
        g=q\,\kappa_\infty+u_8\,v_8. &
    \end{cases}
\end{align*}

Whilst the surface $\mathcal{X}_t$ remains well-defined at $q=0$, it is geometrically distinct from the case $q\neq 0$,  since
the base-point configuration changes in this limit,
\begin{equation*}
    b_8=(\infty,q \, \kappa_\infty)\xrightarrow{q\rightarrow 0} (\infty,0),
\end{equation*}
see Figure \ref{fig:degeneration_initial}. 
This motivates our definition of a subspace of the initial value space, which arises by blowing up all the base points except for $b_8$.

\begin{definition}\label{defi:mathfrakX}
    Let $\overline{\mathfrak{X}}_t$ be the compact rational surface obtained by blowing up $\mathbb{P}^1\times \mathbb{P}^1$ at the points $b_k$, $1\leq k\leq 7$. We define $\mathfrak{X}_t$ as the open surface obtained by
removing from $\overline{\mathfrak{X}}_t$ the strict transform under these blow-ups of the union of the curves $f=0$, $f=\infty$, $g=0$ and $g=\infty$.
\end{definition}
Note that 
\begin{equation}\label{eq:subspaceinitialvalue}
    \mathfrak{X}_t=\mathcal{X}_t\setminus E_8,
\end{equation}
and that, contrary to $\mathcal{X}_t$,  $\mathfrak{X}_t$ does not depend on $q$.

\begin{figure}[H]
\centering
	\begin{tikzpicture}[scale=.85,>=stealth,basept/.style={circle, draw=red!100, fill=red!100, thick, inner sep=0pt,minimum size=1.2mm}]
		\begin{scope}[xshift = -3.5cm]
			\draw [black, line width = 1pt] 	(4.1,2.5) 	-- (-0.5,2.5)	node [left]  {$g=\infty$} node[pos=0, right] {};
			\draw [black, line width = 1pt] 	(0,3) -- (0,-1)			node [below] {$f=0$}  node[pos=0, above, xshift=-7pt] {} ;
			\draw [black, line width = 1pt] 	(3.6,3) -- (3.6,-1)		node [below]  {$f=\infty$} node[pos=0, above, xshift=7pt] {};
			\draw [black, line width = 1pt] 	(4.1,-.5) 	-- (-0.5,-0.5)	node [left]  {$g=0$} node[pos=0, right] {};

			\node (p1) at (0,0.5) [basept,label={[xshift=-10pt, yshift = -10 pt] $b_{1}$}] {};
			\node (p2) at (0,1.5) [basept,label={[xshift=-10pt, yshift = -10 pt] $b_{2}$}] {};
			\node (p3) at (1.2,-0.5) [basept,label={[yshift=-20pt, xshift=0pt] $b_{3}$}] {};
			\node (p4) at (2.4,-0.5) [basept,label={[xshift=0pt, yshift = -20 pt] $b_{4}$}] {};
			\node (p5) at (1.2,2.5) [basept,label={[xshift=0pt,yshift=0pt] $b_{5}$}] {};
			\node (p6) at (2.4,2.5) [basept,label={[xshift=0pt, yshift = 0 pt] $b_{6}$}] {};
			\node (p7) at (3.6,1.5) [basept,label={[xshift=10pt, yshift = -10 pt] $b_{7}$}] {};
			\node (p8) at (3.6,0.5) [basept,label={[xshift=10pt, yshift = -10 pt] $b_{8}$}] {};

			\node (P1P1) at (1.8, -1.8) [below] {$\mathbb{P}^1 \times \mathbb{P}^1$};
		\end{scope}
	
		\draw [<-] (3.5,1)--(1.5,1) node[pos=0.5, below] {$q\rightarrow 0$};
	
		\begin{scope}[xshift = 6cm, yshift= 0cm]
				\draw [black, line width = 1pt] 	(4.1,2.5) 	-- (-0.5,2.5)	node [left]  {$g=\infty$} node[pos=0, right] {};
			\draw [black, line width = 1pt] 	(0,3) -- (0,-1)			node [below] {$f=0$}  node[pos=0, above, xshift=-7pt] {} ;
			\draw [black, line width = 1pt] 	(3.6,3) -- (3.6,-1)		node [below]  {$f=\infty$} node[pos=0, above, xshift=7pt] {};
			\draw [black, line width = 1pt] 	(4.1,-.5) 	-- (-0.5,-0.5)	node [left]  {$g=0$} node[pos=0, right] {};

			\node (p1) at (0,0.5) [basept,label={[xshift=-10pt, yshift = -10 pt] $b_{1}$}] {};
			\node (p2) at (0,1.5) [basept,label={[xshift=-10pt, yshift = -10 pt] $b_{2}$}] {};
			\node (p3) at (1.2,-0.5) [basept,label={[yshift=-20pt, xshift=0pt] $b_{3}$}] {};
			\node (p4) at (2.4,-0.5) [basept,label={[xshift=0pt, yshift = -20 pt] $b_{4}$}] {};
			\node (p5) at (1.2,2.5) [basept,label={[xshift=0pt,yshift=0pt] $b_{5}$}] {};
			\node (p6) at (2.4,2.5) [basept,label={[xshift=0pt, yshift = 0 pt] $b_{6}$}] {};
			\node (p7) at (3.6,1.5) [basept,label={[xshift=10pt, yshift = -10 pt] $b_{7}$}] {};
			\node (p8) at (3.6,-0.5) [basept,label={[xshift=10pt, yshift = -0 pt] $b_{8}$}] {};

			\node (P1P1) at (1.8, -1.8) [below] {$\mathbb{P}^1 \times \mathbb{P}^1$};
		\end{scope}
	\end{tikzpicture}
	\caption{Degeneration of base-point configuration in \eqref{eq:basepoints} under the crystal limit.}
	\label{fig:degeneration_initial}
\end{figure}
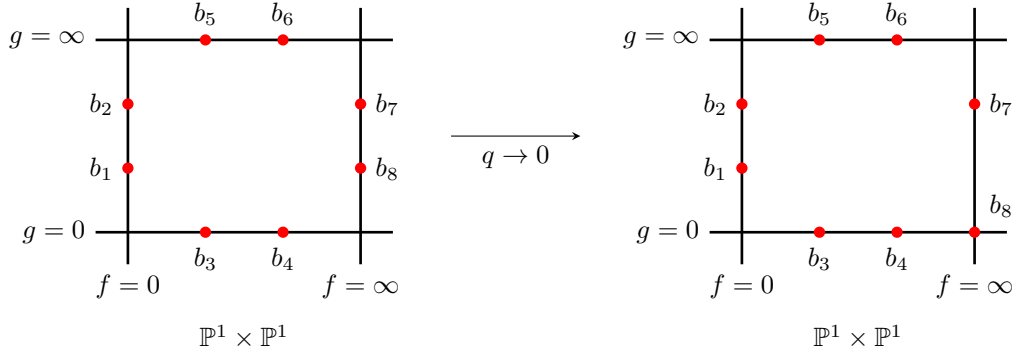

\begin{remark}\label{rem:symmetry}
  The symmetry group of the initial value space $\mathcal{X}_t$ is given by the extended affine Weyl group of type $D_5^{(1)}$ \cite{s:01}. We note that translations on the $D_5^{(1)}$ lattice become ill-defined as $q\rightarrow 0$. The limiting finite group remains to be described.
\end{remark}

\begin{remark}
We remark that taking the crystal limit of $q\Psix$ in the original coordinates $(f_{\text{\tiny JS}}, g_{\text{\tiny JS}})$ would have led 3 of its 8 base points to each approach a corner of the coordinate diagram of $\mathbb P^1\times \mathbb P^1$ (see Figure \ref{fig:degeneration_initial}). Our choice of coordinates for the normalised form of $q\Psix$ instead leads only one of the base points to approach a corner, which we have chosen to be $b_8$. (See Remark \ref{rem:special} for the Riemann-Hilbert point of view of its distinguished properties.) 

At least one base point must approach a corner point, as shown by the following relation among the $8$ base points defined in \eqref{eq:basepoints},
\begin{equation*}
    \frac{(b_1)_2(b_2)_2}{(b_7)_2(b_8)_2}=q \frac{(b_3)_1(b_4)_1}{(b_5)_1(b_6)_1},
\end{equation*}
where $(b_k)_j$ is the value of the $j$th component of $b_k$, for $j=1,2$ and $1\leq k\leq 8$. This relation is invariant under multiplicative scalings of the dependent variables, the independent variable and the parameters of $q\Psix$. The scaling we have chosen is optimal in the sense that the crystal limit leads to only one base point being distinguished in this way.
\end{remark}

\subsection{Geometry of the linear system and crystal limit}\label{sec:linear_system_geom}
Given $t$ and $\kappa\in(\mathbb{C}^*)^4$, the Jimbo-Sakai linear problem \cite{jimbosakai} (rescaled as in \cite[\S 3.1]{jr_qp6}) is  given by
\begin{align}
    Y(qz)&=A(z)Y(z),\label{eq:linear_problem}\\
    A(z)&=A_0+z\, A_1+z^2 A_2,\label{eq:matrix_polynomial}
\end{align}
where $A(z)$ is a $2\times 2$ matrix polynomial in $z$, with determinant given by
\begin{equation}\label{eq:detA}
|A(z)|=(z-\kappa_t^{+1}t)(z-\kappa_t^{-1}t)(z-\kappa_1^{+1})(z-\kappa_1^{-1}),
\end{equation}
and
\begin{equation}
    A_0=H\begin{bmatrix}
    \kappa_0^{+1}t & 0\\
    0 & \kappa_0^{-1}t \end{bmatrix}H^{-1},\quad  A_2=\begin{bmatrix}
    \kappa_\infty^{+1} & 0\\
    0 & \kappa_\infty^{-1} \end{bmatrix}.\label{eq:diagonal}
\end{equation}
for an $H\in GL_2(\mathbb{C})$.

The coefficient matrix is given by
\begin{equation}\label{eq:param1}
    A(z)=\begin{bmatrix}
    \kappa_\infty ((z-f)(z-\alpha)+g_1) & \kappa_\infty^{-1} w(z-f)\\
    \kappa_\infty w^{-1}(\gamma\, z+\delta) & \kappa_\infty^{-1}((z-f)(z-\beta)+g_2)
    \end{bmatrix},
\end{equation}
where
\begin{align*}
g_1&=\kappa_\infty^{-1}(f-\kappa_t t)(f-\kappa_t^{-1}t)g^{-1},\\ 
g_2&=\kappa_\infty(f-\kappa_1)(f-\kappa_1^{-1})g,\nonumber
\end{align*}
and, temporarily using the notation $\mathring{\kappa}_j=\kappa_j+\kappa_j^{-1}$, $j=0,t,1,\infty$,
\begin{align*}
\alpha&=\frac{1}{(1-\kappa_\infty^2)f}
\left(\kappa_\infty^2 g_1-\kappa_\infty\mathring{\kappa}_0t+g_2+(\mathring{\kappa}_tt+\mathring{\kappa}_1)f-2 f^2\right),\\
\beta&=\frac{1}{(\kappa_\infty^2-1)f}
\left(\kappa_\infty^2 g_1-\kappa_\infty\mathring{\kappa}_0t+g_2+\kappa_\infty^2(\mathring{\kappa}_tt+\mathring{\kappa}_1)f-2 \kappa_\infty^2f^2\right),\\
\gamma&=g_1+g_2+f^2+2(\alpha+\beta)f+\alpha\beta -(t^2+\mathring{\kappa}_t\mathring{\kappa}_1t+1),\\           
\delta&=f^{-1}(t^2-(g_1+\alpha f)(g_2+\beta f)).
\end{align*}

Note that the coefficient matrix $A(z)$ is independent of $q$,  In particular, the matrix $H$, diagonalising $A_0$ in \eqref{eq:diagonal}, can and will be chosen independent of $q$.

The linear system has a large number of symmetries induced by gauge transformations \cite{ormerod},
\begin{equation}\label{eq:gauge}
Y(z)\mapsto \tilde{Y}(z)=G(z)Y(z),\qquad A(z)\mapsto \tilde{A}(z)=G(q\,z)A(z)G(z)^{-1},
\end{equation}
where $G(z)\in GL_2(\mathbb{C}(z))$ is such that $\tilde{A}(z)$ is again a matrix polynomial of the form \eqref{eq:matrix_polynomial}.
One of these symmetries corresponds to the time-evolution of $q\Psix$ \cite{jimbosakai},
\begin{equation*}
    G(z)=B(z):=\displaystyle\frac{z^2 I+z\,B_0}{(z-q\,\kappa_t^{+1}t)(z-q\,\kappa_t^{-1}t)},
\end{equation*}
where
\begin{equation*}
    B_0= \begin{bmatrix}\displaystyle
    \frac{q}{1-q}(\widebar{f}+\widebar{\beta}-f-\beta) &\displaystyle -\frac{q\,(\widebar{w}-w)}{q\,\kappa_\infty^2-1} \\
    \displaystyle\frac{q\,\kappa_\infty^2}{\kappa_\infty^2-q}\left(\frac{\widebar{\gamma}}{\widebar{w}}-\frac{\gamma}{w}\right) & \displaystyle\frac{q}{1-q}(\widebar{f}+\widebar{\alpha}-f-\alpha)
    \end{bmatrix}.
\end{equation*}
In other words,
\begin{equation}\label{eq:time_evolution}
    \widebar{A}(z)=B(q\,z)A(z)B(z)^{-1},
\end{equation}
where $\widebar{A}(z)$ denotes the coefficient matrix with $\widebar{t}=q\,t$, coordinates $\widebar{f}$, $\widebar{g}$ defined by the $q\Psix$ evolution \eqref{eq:qpvi} and
\begin{equation*}
    \widebar{w}=w\, \frac{1-q\,\kappa_\infty\,\widebar{g}}{1-\kappa_\infty^{-1}\,\widebar{g}}.
\end{equation*}

For fixed $(f,g,w)$, as $q\rightarrow 0$, we have
\begin{equation*}
    \widebar{f}=\mathcal{O}(q),\quad
    \widebar{g}=\mathcal{O}(q),\quad
    \widebar{w}=\mathcal{O}(1),
\end{equation*}
and correspondingly
\begin{equation*}
    \widebar{\alpha}=\mathcal{O}(1),\quad
    \widebar{\beta}=\mathcal{O}(1),\quad
    \widebar{\gamma}=\mathcal{O}(1),\quad
    \widebar{\delta}=\mathcal{O}(q),
\end{equation*}
so that
\begin{equation*}
   B_0=q\, B_1+\mathcal{O}(q^2),
\end{equation*}
for some matrix $B_1$ with rational entries in $(f,g,w)$. Therefore
\begin{equation*}
    B(z)=I+\mathcal{O}(q),\qquad B(q\,z)=\displaystyle\frac{z^2 I+z\,B_1}{(z-\kappa_t^{+1}t)(z-\kappa_t^{-1}t)}+\mathcal{O}(q).
\end{equation*}
This means that
\begin{equation*}
    \widebar{A}(z)\xrightarrow[]{q\rightarrow 0}\widebar{A}^\diamond(z),\qquad
    \widebar{A}^\diamond(z):=\displaystyle\frac{z^2 I+z\,B_1}{(z-\kappa_t^{+1}t)(z-\kappa_t^{-1}t)} A(z).
\end{equation*}
Now, the limiting coefficient matrix $\widebar{A}^\diamond(z)$ takes the form
\begin{equation*}
    \widebar{A}^\diamond(z)=z\, \widebar{A}_1^\diamond+z^2\,\widebar{A}_2^\diamond,\qquad \widebar{A}_2^\diamond=\begin{bmatrix}
    \kappa_\infty^{+1} & 0\\
    0 & \kappa_\infty^{-1} \end{bmatrix},
\end{equation*}
with determinant $|\widebar{A}^\diamond(z)|=z^2(z-\kappa_1)(z-\kappa_1^{-1})$. Upon removing the overall factor $z$, the evolved linear problem has a coefficient matrix that lies in the hypergeometric class \cite{lecaine}. This is a manifestation of the metamorphosis of the dynamical system in the crystal limit.

We now consider properties of the space of coefficient matrices,
\begin{equation*}
    \mathcal{A}_t=\{A(z)\in GL_2(\mathbb{C}[z]) \text{ satisfying \eqref{eq:matrix_polynomial}, \eqref{eq:detA},\eqref{eq:diagonal}}\}.
\end{equation*}
We denote entries of the coefficients in \eqref{eq:diagonal} by
\begin{equation}\label{eq:A0A1notation}
    A_j=\begin{bmatrix}
        a_j & b_j\\
        c_j & d_j
    \end{bmatrix},
\end{equation}
$j=0,1$. Since $A_2$ is fixed, we consider the entries of $A_0$ and $A_1$ as variables. This implies that the $\mathcal{A}_t$ is an affine algebraic set, as the conditions in equations \eqref{eq:detA} and \eqref{eq:diagonal} are all polynomial conditions.


The auxiliary variable $w$ simply parametrises the freedom of rescaling $A(z)$ by conjugation with a diagonal matrix. Let $\mathcal{A}_t/\hspace{-0.5mm}\sim$ denote the algebraic quotient  of $\mathcal{A}_t$ with respect to conjugation by diagonals. Introducing variables $u_{ij}=b_i\,c_j$, $0\leq i,j\leq 1$, $\mathcal{A}_t/\hspace{-0.5mm}\sim$  is explicitly given by the vanishing locus of the following six polynomials in the eight variables $a_i,d_j,u_{ij}$, $0\leq i,j\leq 1$;
\begin{equation*}
u_{00}u_{11}-u_{01}u_{10},\quad a_0+d_0-t(\kappa_0+\kappa_0^{-1}),
\end{equation*}
and
\begin{equation*}
\Delta(\kappa_t t),\quad \Delta(\kappa_t^{-1} t),\quad \Delta(\kappa_1),\quad \Delta(\kappa_1^{-1}),
\end{equation*}
where
\begin{equation*}
    \Delta(z):=(a_0+a_1 z+\kappa_\infty z^2)(d_0+d_1 z+\kappa_\infty^{-1} z^2)-(u_{00}+(u_{01}+u_{10})z+u_{11}z^2).
\end{equation*}
Here, the vanishing of the first follows from the definition of the $u_{ij}$, the vanishing of the second is a consequence of \eqref{eq:diagonal} and the vanishing of the remaining four follow from \eqref{eq:detA}. Equation \eqref{eq:diagonal} further implies $a_0d_0-u_{00}-t^2=0$ and in this regard we note that the left-hand side is contained in the ideal generated by the six polynomials above.

We are now in a position to describe the mapping between the initial value space and the quotient space of coefficient matrices. 

\begin{lemma}\label{lem:iso_init_to_linear}
The rational mapping 
\begin{equation}\label{eq:parametrisationmap}
 \mathfrak{X}_t    \rightarrow \mathcal{A}_t/\hspace{-0.5mm}\sim,
\end{equation}
realised through the parametrisation $(f,g)\mapsto [A(z)] $,
is an isomorphism.
\end{lemma}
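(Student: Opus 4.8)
The plan is to realise the mapping as a birational morphism from the smooth surface $\mathfrak{X}_t$ onto $\mathcal{A}_t/\hspace{-0.5mm}\sim$ which is moreover bijective, and then to upgrade it to an isomorphism by Zariski's main theorem, using that the target is normal (being smooth away from a locus we control). The crux is an explicit rational inverse, read off directly from the parametrisation \eqref{eq:param1}, supplemented by a local analysis on the seven exceptional divisors.

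First I would build the inverse on the dense open set $\mathbb{C}^*\times\mathbb{C}^*$. The off-diagonal entries in \eqref{eq:param1} are $A(z)_{12}=\kappa_\infty^{-1}w(z-f)$ and $A(z)_{21}=\kappa_\infty w^{-1}(\gamma z+\delta)$, so in the $\sim$-invariant coordinates $a_i,d_j,u_{ij}=b_ic_j$ of $\mathcal{A}_t/\hspace{-0.5mm}\sim$ one computes
\begin{equation*}
u_{11}=\gamma,\quad u_{10}=\delta,\quad u_{01}=-f\gamma,\quad u_{00}=-f\delta,
\end{equation*}
whence $f=-u_{01}/u_{11}=-u_{00}/u_{10}$ is recovered rationally. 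The diagonal entry $A(z)_{11}=\kappa_\infty((z-f)(z-\alpha)+g_1)$ then determines $\alpha$ and $g_1$ from $a_1,a_0$ once $f$ is known, and $g=\kappa_\infty^{-1}(f-\kappa_t t)(f-\kappa_t^{-1}t)\,g_1^{-1}$ recovers $g$. This produces a rational inverse on the locus where $(u_{10},u_{11})\neq(0,0)$ and $g_1\neq0$, and a direct substitution shows the two maps compose to the identity on $\mathbb{C}^*\times\mathbb{C}^*$; hence the mapping is birational.

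Next I would verify the forward map is a morphism on all of $\mathfrak{X}_t$. On $\mathbb{C}^*\times\mathbb{C}^*$ the formulas for $g_1,g_2,\alpha,\beta,\gamma,\delta$ are regular and land in $\mathcal{A}_t$, and passing to the quotient absorbs the gauge parameter $w$. It then remains to extend across the exceptional divisors $E_1,\dots,E_7$. For each $k$ I would substitute the local chart $(u_k,v_k)$ of \eqref{eq:exceptional_para} into \eqref{eq:param1} and check that, although individual quantities such as $\alpha,\beta$ may blow up as $u_k\to0$, the invariant coordinates $a_i,d_j,u_{ij}$ extend holomorphically to $u_k=0$, satisfy the six defining equations of $\mathcal{A}_t/\hspace{-0.5mm}\sim$, and depend isomorphically on $v_k$ along the corresponding boundary curve. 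The conditions \eqref{eq:non_rescrystal}--\eqref{eq:intro_irreducibleparametercrystal} ensure that $b_1,\dots,b_7$ are genuinely distinct and that these boundary curves do not degenerate further, so that each $E_k$ is carried isomorphically onto its image.

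Finally I would settle bijectivity: injectivity off the exceptional locus is immediate from the explicit inverse, and on the boundary it follows once one checks that the images of the distinct $E_k$ are themselves disjoint; surjectivity follows by confirming that every point of $\mathcal{A}_t/\hspace{-0.5mm}\sim$ outside the image of the inverse, namely the degenerate matrices with $u_{10}=u_{11}=0$ or $g_1=0$, is attained along exactly one of the $E_k$. With the target smooth, hence normal, Zariski's main theorem promotes the bijective birational morphism to an isomorphism. I expect the main obstacle to be the boundary bookkeeping of the third step: matching each of the seven exceptional divisors to the correct codimension-one stratum of $\mathcal{A}_t/\hspace{-0.5mm}\sim$ and confirming, via \eqref{eq:intro_irreducibleparametercrystal}, that no two strata collide and no fibre jumps in dimension.
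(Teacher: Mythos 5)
Your first two steps (regularity of the forward map via the local charts $(u_k,v_k)$, and the explicit rational inverse $f=-u_{00}/u_{10}=-u_{01}/u_{11}$ with $g$ recovered from $g_1$) track the paper's argument closely and are computed correctly. The genuine gap is in your final step: Zariski's main theorem requires the \emph{target} $\mathcal{A}_t/\hspace{-0.5mm}\sim$ to be normal, and you assert this with the parenthetical ``being smooth away from a locus we control''. That inference is invalid: smoothness away from a small locus does not imply normality. The cuspidal cubic $\{y^2=x^3\}$ is smooth away from one point, and the bijective birational morphism $t\mapsto(t^2,t^3)$ from $\mathbb{A}^1$ onto it is exactly the kind of map your argument would wrongly promote to an isomorphism; nothing in your proposal rules out such a cusp-like non-normal point of $\mathcal{A}_t/\hspace{-0.5mm}\sim$ hiding in one of the boundary strata. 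Worse, the natural way to see that $\mathcal{A}_t/\hspace{-0.5mm}\sim$ is normal is via its isomorphism with the smooth surface $\mathfrak{X}_t$ --- i.e., via the lemma you are trying to prove --- so as written the argument is circular at the decisive point. To salvage your route you would need an independent proof of normality of the affine set cut out by the six polynomials in the eight variables $a_i,d_j,u_{ij}$ (for instance, that it is a reduced complete intersection, hence Cohen--Macaulay and $S_2$, together with finiteness of its singular locus to get $R_1$), none of which is addressed.

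The repair is what the paper does, and you already have the ingredients: instead of keeping the inverse merely rational and compensating with bijectivity plus Zariski's main theorem, show that the inverse is \emph{globally regular} on $\mathcal{A}_t/\hspace{-0.5mm}\sim$. The non-splitting conditions \eqref{eq:intro_irreducibleparameter} forbid $u_{00},u_{01},u_{10},u_{11}$ from vanishing simultaneously, so $[A(z)]\mapsto f$ is regular as a map to $\mathbb{P}^1$; and for $g$ one must use \emph{both} expressions in \eqref{eq:gform} (not only the one through $g_1$, as you do): the non-resonance conditions \eqref{eq:non_res} prevent the two numerators and the two denominators from vanishing simultaneously, so $[A(z)]\mapsto g$ is regular as well. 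A case analysis on the degenerate strata then shows the resulting map lands in $\mathfrak{X}_t$ and is regular there; e.g.\ on the curve $u_{10}=u_{11}=0$ one finds $u_7=0$, $v_7=d_1+\kappa_\infty^{-1}(\kappa_1+\kappa_1^{-1})$, so the map is regular into the chart around $E_7$. Two regular maps that are mutually inverse on dense open sets are mutually inverse everywhere, so the isomorphism follows with no normality hypothesis and no appeal to Zariski's main theorem; this also renders your injectivity and surjectivity bookkeeping on the boundary unnecessary, since surjectivity is automatic once a two-sided regular inverse exists.
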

\begin{proof}
We will start by showing that \eqref{eq:parametrisationmap} is a regular map.
Note that formula \eqref{eq:param1} shows that the mapping is regular away from the exceptional lines $E_k$, $1\leq k\leq 7$.

For $1\leq k\leq 7$, we use local coordinates to study the mapping in the neighbourhood of the exceptional lines $E_k$; see equation \eqref{eq:exceptional_para}. Considering $k=1$, we have
\begin{align*}
a_0=\,&\kappa_0^{-1}t+\mathcal{O}(u_1),\\
d_0=\,&\kappa_0^{+1}t+\mathcal{O}(u_1),\\
a_1=\,&\frac{1}{\kappa_\infty^{-2}-1}\left(t\,\kappa_0\,\mathring{\kappa}_1+\kappa_0^{-1} \mathring{\kappa}_t-\kappa_\infty^{-1}(t\, \mathring{\kappa}_t+\mathring{\kappa}_1)+(\kappa_0^{-2}-1)v_1\right)+\mathcal{O}(u_1),\\
d_1=\,&\frac{1}{\kappa_\infty^{+2}-1}\left(t\,\kappa_0\,\mathring{\kappa}_1+\kappa_0^{-1} \mathring{\kappa}_t-\kappa_\infty^{+1}(t\, \mathring{\kappa}_t+\mathring{\kappa}_1)+(\kappa_0^{-2}-1)v_1\right)+\mathcal{O}(u_1),\\
b_0\, c_0=\,&\mathcal{O}(u_1),\\
b_0\, c_1=\,&\mathcal{O}(u_1),\\
b_1\, c_0=\,&\frac{(\kappa_0^{-2}-1)t}{\kappa_\infty^2-1}\left((t\,\kappa_0\,\kappa_\infty-1)(\kappa_0\,\kappa_\infty\,\mathring{\kappa}_1-\mathring{\kappa}_t)-(\kappa_0\,\kappa_\infty^2-\kappa_0^{-1})v_1\right)+\mathcal{O}(u_1),\\
b_1\, c_1=\,&\frac{t\,\kappa_0\,\kappa_\infty-1}{\kappa_0^2(\kappa_\infty^2-1)^2}(\kappa_\infty-t\,\kappa_0)(\kappa_0\,\kappa_\infty\,\mathring{\kappa}_1-\mathring{\kappa}_t )(\kappa_0\,\mathring{\kappa}_1-\kappa_\infty\, \mathring{\kappa}_t)\\
&+(t\,\kappa_0\,\kappa_\infty-1)(1-\kappa_0^{-1}\kappa_\infty^{-1}t)-\frac{\kappa_\infty^2(\kappa_0^2-1)^2}{\kappa_0^4(\kappa_\infty^2-1)^2}v_1^2\\
&-\frac{\kappa_\infty^3(\kappa_0^2-1)}{\kappa_0^3(\kappa_\infty^2-1)^2}\Big{[}(2\,t\,\kappa_0-\mathring{\kappa}_\infty)\mathring{\kappa}_t+
(t\,\kappa_0\,\mathring{\kappa}_\infty-2)\mathring{\kappa}_1
\Big{]}v_1+\mathcal{O}(u_1),
\end{align*}
as $u_1\rightarrow 0$. This means that the mapping \eqref{eq:parametrisationmap} is regular around the exceptional curve $E_1$. Similarly, we see that the mapping is regular around the other exceptional curves $E_k$, $2\leq k\leq 7$. It follows that the mapping  \eqref{eq:parametrisationmap} is regular.

Points in the exceptional curves $E_1$ and $E_2$ are mapped to classes of coefficient matrices $[A(z)]$ in $\mathcal{A}_t/\sim$, with respectively
\begin{equation*}
    A(0)=\begin{bmatrix} \kappa_0^{-1} t & 0\\
    * & \kappa_0^{+1} t
    \end{bmatrix},\quad    A(0)=\begin{bmatrix} \kappa_0^{+1} t & 0\\
    * & \kappa_0^{-1} t
    \end{bmatrix},
\end{equation*}
 points in $E_3$ and $E_4$ are  mapped to classes
 of coefficient matrices $[A(z)]$ with respectively
\begin{equation*}
    A(\kappa_t\, t)=\begin{bmatrix} * & 0\\
    * & 0
    \end{bmatrix},\quad     A(\kappa_t^{-1} t)=\begin{bmatrix} * & 0\\
    * & 0
    \end{bmatrix},
\end{equation*}
 points in $E_5$ and $E_6$ are  mapped to classes
 of coefficient matrices $[A(z)]$ with respectively
\begin{equation*}
    A(\kappa_1)=\begin{bmatrix} 0 & 0\\
    * & *
    \end{bmatrix},\quad     A(\kappa_1^{-1})=\begin{bmatrix} 0 & 0\\
    * & *
    \end{bmatrix},
\end{equation*}
and points in $E_7$ are mapped to classes of coefficient matrices $[A(z)]$  with $A_{12}(z)\equiv c$ constant in $z$. We note that, to verify the last assertion, one needs to make use of the freedom of conjugation by diagonals, as some of coefficients of $A(z)$ diverge in the limit $u_7\rightarrow 0$ without scaling $w$.
Namely, one sets $w=u_7\, \widetilde{w}$ in \eqref{eq:param1}, so that the limit $u_7\rightarrow 0$ of $A(z)$ is well-defined.

As follows from the parametrisation \eqref{eq:param1}, the mapping  \eqref{eq:parametrisationmap} has a rational inverse, given by
\begin{subequations}\label{eq:rational_inverse}
\begin{align}
     f&=-\frac{u_{00}}{u_{10}}=-\frac{u_{01}}{u_{11}},\label{eq:fform}\\
     g&=\frac{(f-\kappa_t t)(f-\kappa_t^{-1}t)}{a_0+a_1 f+\kappa_\infty f^2}=\frac{d_0+d_1f+\kappa_\infty^{-1}f^2}{(f-\kappa_1)(f-\kappa_1^{-1})}.\label{eq:gform}
\end{align}
\end{subequations}
Due to the non-splitting conditions \eqref{eq:intro_irreducibleparameter}, it is impossible for $u_{00}$, $u_{01}$, $u_{10}$ and $u_{11}$ to be zero simultaneously on $\mathcal{A}_t/\hspace{-0.5mm}\sim$, hence
\begin{equation*}
    \mathcal{A}_t/\hspace{-0.5mm}\sim\rightarrow \mathbb{P}^1,[A(z)]\mapsto f
\end{equation*}
is a regular map. Similarly, by the non-resonance conditions \eqref{eq:non_res}, it is impossible for both the numerators and both the denominators in \eqref{eq:gform} to be simultaneously zero on $\mathcal{A}_t/\hspace{-0.5mm}\sim$. Therefore,
\begin{equation*}
    \mathcal{A}_t/\hspace{-0.5mm}\sim\rightarrow \mathbb{P}^1,[A(z)]\mapsto g
\end{equation*}
is also a regular map. 

By distinguishing between the seven cases above and the generic case, one checks that this induces a unique regular map into $\mathfrak{X}_t$,
\begin{equation*}
       \mathcal{A}_t/\hspace{-0.5mm}\sim\rightarrow \mathfrak{X}_t,[A(z)]\mapsto (f,g),
\end{equation*}
which is a rational inverse of \eqref{eq:parametrisationmap}.
For example, consider the seventh case, i.e. the curve defined by $u_{10}=u_{11}=0$ in $\mathcal{A}_t/\hspace{-0.5mm}\sim$. By the explicit formulas \eqref{eq:rational_inverse}, clearly $(f,g)=(\infty,\kappa_\infty^{-1})$ on this curve. In terms of the local coordinates $(u_7,v_7)$ around $E_7$, we find
\begin{equation*}
u_7=0,\quad
v_7=d_1+\kappa_\infty^{-1}(\kappa_1+\kappa_1^{-1}),
\end{equation*}
on the curve defined by $u_{10}=u_{11}=0$ and thus the map is regular around this curve.
It follows that the mapping \eqref{eq:parametrisationmap} is an isomorphism.
\end{proof}

It now remains to consider $E_8$.
\begin{remark}\label{rem:special}
We observe that, for $q\neq 0$, the rational mapping
\begin{equation}\label{eq:parametrisation_initial}
 \mathcal{X}_t \rightarrow \mathcal{A}_t/\sim,\, (f,g)\mapsto [A(z)],
\end{equation}
is singular along $E_8$. From a Riemann-Hilbert point of view, times $t$ where $(f,g)$ take value in $E_8$ are exactly those times for which the corresponding Riemann-Hilbert problem has no solution, see \cite[Theorem 2.12]{jr_qp6}.
\end{remark}



\subsection{Crystal limit of canonical solutions}\label{sec:canonical_sol}
In this section, we study canonical solutions of the linear system \eqref{eq:linear_problem} and their analyticity as functions of $q$ in a domain containing $q=0$.
Before we do so, we recall some standard notation and results.

Firstly,  the $q$-Pochhammer symbol
\begin{equation*}
(z;q)_\infty=\prod_{k=0}^{\infty}{(1-q^kz)}.
\end{equation*}
Here, the right-hand side is locally uniformly convergent in $(z,q)\in\mathbb{C}\times \mathbb{D}$, where $\mathbb{D}$ denotes the open unit disc, $\mathbb{D}:=\{q\in\mathbb{C}: |q|<1\}$.
Secondly, we will use the $\mathit{q}$-theta function, defined by
\begin{equation*}
\theta_q(z)=(q;q)_\infty(z;q)_\infty(q/z;q)_\infty.
\end{equation*}
This function is analytic in $(z,q)\in\mathbb{C}^*\times \mathbb{D}$, and admits the following convergent expansion in its domain,
\begin{equation}\label{eq:jacobitripleproduct}
 \theta_q(z)=\sum_{n=-\infty}^\infty(-1)^n q^{\frac{1}{2}n(n-1)}z^n.
\end{equation}
This is known as the Jacobi triple product formula. By collecting terms with similar powers of $q$, \eqref{eq:jacobitripleproduct} can be rewritten as a convergent power series in $q$ around $q=0$, which in particular gives
\begin{equation}\label{eq:theta_crystal}
    \theta_q(z)=1-z+\mathcal{O}(q),
\end{equation}
as $q\rightarrow 0$, locally uniformly in $z\in\mathbb{C}^*$.

For $n\in\mathbb{N}^*$, we use the common abbreviation for repeated products of this function,
\begin{equation*}
\theta_q(z_1,\ldots,z_n)=\theta_q(z_1)\cdot \ldots\cdot \theta_q(z_n).
\end{equation*}

As shown by Carmichael \cite{carmichael1912}, under the respective conditions $\kappa_0^2\notin q^{\mathbb{Z}}$ and $\kappa_\infty^2\notin q^{\mathbb{Z}}$, the linear system \eqref{eq:linear_problem} has solutions $Y_\infty(z)$ and  $Y_0(z)$
of the form
\begin{equation}\label{eq:linear_sys_solutions}
\begin{aligned}
Y_\infty(z)&=z^{\log_q(z)-1}\Psi_\infty(z) \,z^{-\log_q(\kappa_\infty)\sigma_3},\\
Y_0(z)&=z^{\log_q(t)}\Psi_0(z)\, z^{\log_q(\kappa_0)\sigma_3},
\end{aligned}
\end{equation}
where $\Psi_\infty(z)$ is the unique analytic matrix function on $ \mathbb{P}^1\setminus\{0\}$ satisfying
\begin{equation}\label{eq:solinfty}
\begin{aligned}
\Psi_\infty(z)&=\frac{q^2}{z^2} \,A(z/q) \,\Psi_\infty(z/q)\,\kappa_\infty^{-\sigma_3},\\
\Psi_\infty(\infty)&=I,
\end{aligned}
\end{equation}
and $\Psi_0(z)^{-1}$ is the unique analytic matrix function on $\mathbb{C}$ that satisfies
\begin{equation}\label{eq:solzero}
\begin{aligned}
\Psi_0(z)^{-1}&=t^{-1}\kappa_0^{-\sigma_3} \Psi_0(qz)^{-1} A(z),\\
\Psi_0(0)^{-1}&=H^{-1}.
\end{aligned}
\end{equation}

We now consider these two solutions around $q=0$. Their existence is predicated on the conditions $\kappa_0^2,\kappa_\infty^2\notin q^{\mathbb{Z}}$. For sufficiently small $q$, these conditions are trivially satisfied and we have the following result regarding the analyticity of the matrix functions above as functions of $q$, where we use the notation $D_R:=\{q\in\mathbb{C}:|q|<R\}$.

\begin{proposition}\label{prop:crystal_canonical_sol}
Let
$$R_0=\min(|\kappa_0|^2,|\kappa_0|^{-2}),$$ 
then $\Psi_0(z)^{-1}$ is analytic in $(z,q)\in \mathbb{C}\times D_{R_0}$ and
\begin{align*}
       \Psi_0(z)^{-1}&=t^{-1}\kappa_0^{-\sigma_3}H^{-1}A(z)+\mathcal{O}(q),\\
       \Psi_0(qz)^{-1}&=H^{-1}+\mathcal{O}(q),
\end{align*}
locally uniformly in $z\in\mathbb{C}$, as $q\rightarrow 0$, where $H$ is defined in Equation \eqref{eq:diagonal}.

Similarly, set $$R_\infty=\min(|\kappa_\infty|^2,|\kappa_\infty|^{-2}),$$  then $\Psi_\infty(z)$ is analytic in $(z,q)\in (\mathbb{P}^1\setminus \{0\})\times D_{R_\infty}$ and
\begin{align}
   \Psi_\infty(z)&=I+\mathcal{O}(q),\label{eq:psiinfq0a}\\
   \Psi_\infty(qz)&=z^{-2}A(z)\kappa_\infty^{-\sigma_3}+\mathcal{O}(q), \label{eq:psiinfq0b}
\end{align}
locally uniformly in $z\in\mathbb{P}^1\setminus \{0\}$,  as $q\rightarrow 0$.
\end{proposition}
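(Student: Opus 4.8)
The plan is to reconstruct each Carmichael solution as a power series whose matrix coefficients are rational in $q$, and then to extract both the analytic dependence on $q$ and the crystal limit from the leading coefficient. Starting with $\Phi(z):=\Psi_0(z)^{-1}$, which is required to be analytic on $\mathbb{C}$, I would posit $\Phi(z)=\sum_{n\ge0}\Phi_n z^n$ and substitute into \eqref{eq:solzero}, using $A(z)=A_0+zA_1+z^2A_2$. Matching powers of $z$ gives
\[
\Phi_n-q^n\,t^{-1}\kappa_0^{-\sigma_3}\Phi_n A_0=t^{-1}\kappa_0^{-\sigma_3}\bigl(q^{n-1}\Phi_{n-1}A_1+q^{n-2}\Phi_{n-2}A_2\bigr),\qquad \Phi_{-1}=\Phi_{-2}=0.
\]
Since $t^{-1}A_0=H\kappa_0^{\sigma_3}H^{-1}$ by \eqref{eq:diagonal}, the left-hand operator $L_n(X)=X-q^n\kappa_0^{-\sigma_3}XH\kappa_0^{\sigma_3}H^{-1}$ has eigenvalues $1-q^n\mu$ with $\mu\in\{1,1,\kappa_0^{2},\kappa_0^{-2}\}$. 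For $n=0$ the two-dimensional kernel is matched by the initial condition $\Phi_0=\Psi_0(0)^{-1}=H^{-1}$, while for $n\ge1$ the operator $L_n$ is invertible exactly when $q^n\notin\{1,\kappa_0^{2},\kappa_0^{-2}\}$. This is where $R_0=\min(|\kappa_0|^2,|\kappa_0|^{-2})$ enters: for $|q|<R_0$ one checks $|q|^n<1$ and $|q|^n<|\kappa_0|^{\pm2}$ for all $n\ge1$, so each $\Phi_n$ is uniquely determined and rational, hence analytic, in $q$ on $D_{R_0}$.

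The main analytic step is to turn this formal solution into a genuine jointly analytic function on the stated domain. Fixing $r<R_0$, I would first show $\|L_n^{-1}\|\le C$ uniformly in $n\ge1$ and $|q|\le r$, using that $r|\mu|<R_0|\mu|\le1$ keeps the eigenvalues $1-q^n\mu$ bounded away from $0$. The recursion then yields $\|\Phi_n\|\le C'\bigl(r^{n-1}\|\Phi_{n-1}\|+r^{n-2}\|\Phi_{n-2}\|\bigr)$, and the decreasing factors $r^{n-1},r^{n-2}$ force $\|\Phi_n\|^{1/n}\to0$; thus $\sum\Phi_n z^n$ is entire in $z$ and converges locally uniformly in $q\in D_{R_0}$, so by Weierstrass' theorem its sum is analytic in $(z,q)\in\mathbb{C}\times D_{R_0}$, and by the uniqueness in Carmichael's construction it equals $\Psi_0(z)^{-1}$. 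The identical scheme applies to $\Psi_\infty(z)=\sum_{n\ge0}\Psi_n z^{-n}$ (with $\Psi_0=\Psi_\infty(\infty)=I$): substituting into \eqref{eq:solinfty} gives $\Psi_n-q^nA_2\Psi_n\kappa_\infty^{-\sigma_3}=q^n(A_1\Psi_{n-1}+A_0\Psi_{n-2})\kappa_\infty^{-\sigma_3}$, and since $A_2=\kappa_\infty^{\sigma_3}$ the governing operator $M_n(X)=X-q^n\kappa_\infty^{\sigma_3}X\kappa_\infty^{-\sigma_3}$ has eigenvalues $1-q^n\{1,1,\kappa_\infty^{2},\kappa_\infty^{-2}\}$, invertible for $n\ge1$ precisely when $|q|<R_\infty=\min(|\kappa_\infty|^2,|\kappa_\infty|^{-2})$; the same decay estimate gives convergence for all $z\ne0$ and analyticity on $(\mathbb{P}^1\setminus\{0\})\times D_{R_\infty}$.

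The crystal limits then follow by combining the leading coefficient with the functional equations. At $q=0$ every recursion term beyond the leading one carries a positive power of $q$, leaving $\Phi_0=H^{-1}$ and $\Psi_0=I$; hence $\Phi(qz)=H^{-1}+\mathcal{O}(q)$ and $\Psi_\infty(z)=I+\mathcal{O}(q)$ locally uniformly, using joint analyticity. Feeding the first into \eqref{eq:solzero} gives $\Psi_0(z)^{-1}=t^{-1}\kappa_0^{-\sigma_3}\Phi(qz)A(z)=t^{-1}\kappa_0^{-\sigma_3}H^{-1}A(z)+\mathcal{O}(q)$, and I would record the consistency check $t^{-1}\kappa_0^{-\sigma_3}H^{-1}A_0=H^{-1}$ for the constant term. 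For $\Psi_\infty$, replacing $z$ by $qz$ in \eqref{eq:solinfty} produces the exact identity $\Psi_\infty(qz)=z^{-2}A(z)\Psi_\infty(z)\kappa_\infty^{-\sigma_3}$, into which $\Psi_\infty(z)=I+\mathcal{O}(q)$ substitutes to yield $\Psi_\infty(qz)=z^{-2}A(z)\kappa_\infty^{-\sigma_3}+\mathcal{O}(q)$; this indirect route is essential because $\Psi_\infty$ is singular at $z=0$, so the limit cannot be taken directly into the argument $qz$.

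I expect the hard part to be the analytic upgrade in the second paragraph: securing the uniform bounds $\|L_n^{-1}\|,\|M_n^{-1}\|\le C$ on compact subdiscs and converting the two-term recursions into super-geometric decay of the coefficients. This is exactly what pins down the radii $R_0,R_\infty$ and delivers joint analyticity; once it is in place, the crystal-limit formulas are essentially immediate from the functional equations.
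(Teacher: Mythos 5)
Your proposal is correct and follows essentially the same strategy as the paper's proof: expand the Carmichael solution as a power series, derive the coefficient recurrence from the functional equation, establish uniform (super-geometric) coefficient bounds on compact subdiscs $|q|\le r$ to get joint analyticity, and then obtain the shifted-argument asymptotics ($\Psi_0(z)^{-1}$ and $\Psi_\infty(qz)$) by feeding the leading-order limits back into the exact functional equations. The only differences are presentational: you treat $\Psi_0(z)^{-1}$ in detail where the paper writes out $\Psi_\infty(z)$ (declaring the other case similar), and you phrase the solvability of the recurrence via eigenvalues of the Sylvester-type operators $L_n$, $M_n$ where the paper bounds the entry-wise denominators by a constant $\beta$ — the substance is identical.
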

\begin{proof}
We consider the second case (the proof of the first assertion is similar). 
Note that the condition $\kappa_\infty^2\notin q^{\mathbb{Z}}$ for its existence is trivially satisfied for $|q|<R_\infty$. Fix any $0<R<R_\infty$, then $\Psi_\infty(z)$ is well-defined for all $q$ inside the punctured disc $\{0<|q|\leq R\}$.


By Carmichael's construction, the matrix function $\Psi_\infty(z)$ is analytic in $z\in\mathbb{P}^1\setminus\{0\}$, which implies that it has a power series expansion around $z=\infty$,
\begin{equation}\label{eq:psiinfseries}
  \Psi_\infty(z)=I+\sum_{n=1}^\infty z^{-n}V_n(q),
\end{equation}
that converges in that domain.  By estimating the coefficients in this expansion, we will derive that $\Psi_\infty(z)$ is analytic in $q$ on $D_{R}$.

From equations \eqref{eq:solinfty}, we find the following recursive formula for the coefficients in \eqref{eq:psiinfseries},
\begin{equation}\label{eq:recurrence}
    q^{-n} V_n(q) \kappa_\infty^{\sigma_3}-\kappa_\infty^{\sigma_3} V_n(q)=q A_1 V_{n-1}(q)+q^2 A_0 V_{n-2}(q),
\end{equation}
 for $n\geq 1$, with initial conditions
\begin{equation*}
    V_0(q)=I,\quad V_{-1}(q)=0.
\end{equation*}

From the recurrence, we immediately see that each coefficient matrix $V_n(q)$ is a rational function in $q$. Multiplying Equation \eqref{eq:recurrence} by $q^n$, we see that $V_n(q)$ is analytic on $D_R$ with
\begin{equation}\label{eq:estimate_V}
    V_n(q)=\mathcal{O}(q^{\lfloor\frac{1}{4}n(n+6)\rfloor}),
\end{equation}
as $q\rightarrow 0$.

Now, choose an $\alpha\geq 1$ such that
\begin{equation*}
    \lVert A_1\rVert_{\max}\leq \alpha,\quad \lVert A_0\rVert_{\max}\leq \alpha^2, \quad  
\end{equation*}
where $\lVert\cdot\rVert_{\max}$ denotes the max-norm on $2\times 2$ matrices. Further, write $k=|\kappa_\infty|$ and define
\begin{equation*}
    \beta=\min\{(1-R)k,
    (1-R)k^{-1},|k-R\,k^{-1}|,
    |k^{-1}-R\,k|
    \},
\end{equation*}
so that $0<\beta<1$ and, for all $n\geq 1$,
\begin{equation*}
    \beta\leq \min{(|(q^{n}-1)\kappa_\infty|,|(q^{n}\kappa_\infty^{-1}-\kappa_\infty)|,|q^{n}\kappa_\infty-\kappa_\infty^{-1}|,|(q^{n}-1)\kappa_\infty^{-1}|)}.
\end{equation*}

Then, for $0<|q|\leq R$ and $n\geq 1$, the recurrence relation in equation \eqref{eq:recurrence} implies
\begin{align*}
 \lVert V_n(q)\rVert_{\max}&\leq \frac{\lVert q A_1 V_{n-1}(q)+q^2 A_0 V_{n-2}(q)\rVert_{\max}}{\min{(|(q^{-n}-1)\kappa_\infty|,|(q^{-n}\kappa_\infty^{-1}-\kappa_\infty)|,|q^{-n}\kappa_\infty-\kappa_\infty^{-1}|,|(q^{-n}-1)\kappa_\infty^{-1}|)}}\\
&\leq \frac{R^n}{\beta}\lVert q A_1 V_{n-1}(q)+q^2 A_0 V_{n-2}(q)\rVert_{\max}\\
&\leq R^{n} \frac{\alpha R}{\beta}\lVert V_{n-1}(q)\rVert_{\max}+
R^{n} \frac{(\alpha R)^2}{\beta}\lVert V_{n-2}(q)\rVert_{\max}\\
&\leq R^{n} \frac{\alpha R}{\beta}\lVert V_{n-1}(q)\rVert_{\max}+
R^{n} \bigg(\frac{\alpha R}{\beta}\bigg)^2\lVert V_{n-2}(q)\rVert_{\max}.
\end{align*}
Here, the second inequality follows from multiplication of numerator and denominator by $|q|^n$.
Therefore, by induction, we obtain the estimate
\begin{equation*}
    \lVert V_n(q)\rVert_{\max}\leq \bigg(\frac{2\alpha}{\beta}\bigg)^n R^{\frac{1}{4}n(n+6)},
\end{equation*}
for all $|q|\leq R$, $n\geq 0$.

This means that the series representation of $\Psi_\infty(z)$ in equation \eqref{eq:psiinfseries}
is uniformly absolutely convergent on
\begin{equation*}
    \{(z,q)\in\mathbb{P}^1\times \mathbb{C}:|z|\geq \epsilon, |q|\leq R\},
\end{equation*}
for any $\epsilon>0$. In particular, $\Psi_\infty(z)$ is analytic in
$(z,q)\in (\mathbb{P}^1\setminus\{0\})\times D_R$. Since this holds for any $0<R<R_\infty$, it also holds when $R=R_\infty$. Furthermore, it follows directly from estimate \eqref{eq:estimate_V} that $V_n(q)$ vanishes faster than $\mathcal O(q)$ as $q\rightarrow 0$ for all $n\geq 1$. As a consequence, we obtain estimate \eqref{eq:psiinfq0a}.

Next, from the $q$-difference equation in \eqref{eq:solinfty}, we obtain
\begin{align*}
\Psi_\infty(q\,z)&=\frac{1}{z^2} A(z) \Psi_\infty(z)\kappa_\infty^{-\sigma_3}\\
&=\frac{1}{z^2} A(z) (I+\mathcal{O}(q))\kappa_\infty^{-\sigma_3}\\
&=\frac{1}{z^2} A(z)\kappa_\infty^{-\sigma_3}+\mathcal{O}(q),
\end{align*}
as $q\rightarrow 0$, proving equation \eqref{eq:psiinfq0b}, as desired.
\end{proof}

From Proposition \ref{prop:crystal_canonical_sol}, we obtain the crystal limits of the matrix functions $\Psi_0(z)$ and $\Psi_\infty(z)$,
\begin{align*}
    \Psi_0(z)&\xrightarrow{q\rightarrow 0} \Psi_0^\diamond(z),  & \Psi_0^\diamond(z)&=t\,A(z)^{-1}H\kappa_0^{\sigma_3},\\
    \Psi_\infty(z)&\xrightarrow{q\rightarrow 0} \Psi_\infty^\diamond(z),  & \Psi_\infty^\diamond(z)&=I. 
\end{align*}

\subsection{Crystal limit of the connection matrix}\label{sec:connection_matrix}

In this section, we study the {\em connection matrix} relating the two matrix functions $\Psi_0(z)$ and $\Psi_\infty(z)$,
\begin{equation*}
    C(z):=\Psi_0(z)^{-1}\Psi_\infty(z).
\end{equation*}
Recall from \cite{jr_qp6}, that
for $q\neq 0$, this matrix has the following analytic properties with respect to $z$.
\begin{enumerate}
    \item It is a single-valued analytic function in $z\in\mathbb{C}^*$.
    \item It satisfies the $q$-difference equation
    \begin{equation*}
        C(q\,z)=\frac{t}{z^2}\kappa_0^{\sigma_3}C(z)\kappa_\infty^{-\sigma_3}.
    \end{equation*}
    \item Its determinant is given by
    \begin{equation*}
    |C(z)|=c\, \theta_q\left(\kappa_t^{+1}\frac{z}{t},\kappa_t^{-1}\frac{z}{t},\kappa_1^{+1}z,\kappa_1^{-1}z\right),
    \end{equation*}
    for some $c\in\mathbb{C}^*$.
\end{enumerate}
Following \cite[Definition 2.3]{jr_qp6}, we introduce a corresponding monodromy manifold.
\begin{definition}\label{def:monodromy_manifold}
For $q\neq 0$, we define $\mathcal{M}_t$ to be the space of connection matrices satisfying properties (1),(2) and (3) above, quotiented by arbitrary left and right-multiplication by invertible diagonal matrices.
We refer to $\mathcal{M}_t$ as the monodromy manifold of $q\Psix$.
\end{definition}

Using Proposition \ref{prop:crystal_canonical_sol}, we can compute the crystal limit of the connection matrix,
\begin{align}
    C(z)&=\Psi_0(z)^{-1}\Psi_\infty(z)\nonumber\\
    &=(t^{-1}\kappa_0^{-\sigma_3}H^{-1}A(z)+\mathcal{O}(q))(I+\mathcal{O}(q))\label{eq:connectioncrytallim}\\
    &=t^{-1}\kappa_0^{-\sigma_3}H^{-1}A(z)+\mathcal{O}(q)\nonumber,
\end{align}
as $q\rightarrow 0$, which holds locally uniformly in $z\in\mathbb{C}^*$.
So, we find that
\begin{equation}\label{eq:connectioncrytallimat0}
  C(z)\xrightarrow{q\rightarrow 0} C^\diamond(z),\qquad   C^\diamond(z)=t^{-1}\kappa_0^{-\sigma_3}H^{-1}A(z).
\end{equation}
This matrix function has the following characterising properties.
\begin{itemize}
    \item[(1)'] The matrix $C^\diamond(z)$ is a degree two matrix polynomial, 
    \begin{equation*}
        C^\diamond(z)=C_0^\diamond+z\, C_1^\diamond+z^2\, C_2^\diamond.
    \end{equation*}
     \item[(2)'] The constant and leading order coefficient of $C^\diamond(z)$ are related by
     \begin{equation*}
       C_0^\diamond=t\, \kappa_0^{\sigma_3}\, C_2^\diamond\,\kappa_\infty^{-\sigma_3}.
     \end{equation*}
     \item[(3)'] Its determinant is given by
     \begin{equation*}
    |C^\diamond(z)|=c\, (z-\kappa_t^{+1}t)(z-\kappa_t^{-1}t)(z-\kappa_1^{+1})(z-\kappa_1^{-1}),
    \end{equation*}
    for some $c\in\mathbb{C}^*$.
\end{itemize}
In analogy with Definition \ref{def:monodromy_manifold}, we thus make the following definition.
\begin{definition}\label{def:monodromy_manifold_crystal}
We define $\mathcal{M}_t^\diamond$ to be the space of connection matrices satisfying properties (1)',(2)' and (3)' above, quotiented by arbitrary left and right-multiplication by invertible diagonal matrices.
We refer to $\mathcal{M}_t^\diamond$ as the crystal limit of the monodromy manifold of $q\Psix$.
\end{definition}

\section{A Segre surface and the Riemann-Hilbert correspondence}\label{s:segre_RHP}
In this section, we define a precise mapping that is an instance of the Riemann-Hilbert correspondence in the setting of $q\Psix$. The domain of this mapping is the initial value space $\mathcal{X}_t$, constructed in Definition \ref{def:initial_space}. The co-domain is an affine Segre surface. We then study this mapping, and its co-domain, in the crystal limit $q\rightarrow 0$. This leads in particular to our main result, Theorem \ref{thm:main}.


\subsection{Tyurin parameters}\label{subsec:tyurin}
In this section, we study the crystal limit of the Tyurin parameters, introduced in \cite[Section 2.4]{jr_qp6}, associated with the monodromy manifold in Definition \ref{def:monodromy_manifold}.

For any $2\times2$ matrix $R$ of rank one, let $R_1$ and $R_2$ be respectively its first and second column, then we define $\pi(R)\in\mathbb{P}^1$ by
\begin{equation*}
R_1=\pi(R)R_2.
\end{equation*}
 Denoting
\begin{equation}\label{eq:intro_xnotation}
    (x_1,x_2,x_3,x_4)=(\kappa_t t,\kappa_t^{-1} t,\kappa_1 ,\kappa_1^{-1}),
\end{equation}
the Tyurin parameters are defined by
\begin{equation}\label{eq:tyurinparam}
\rho_k=\pi(C(x_k)),\quad (1\leq k\leq 4).
\end{equation}
The Tyurin parameters $\rho=(\rho_1,\rho_2,\rho_3,\rho_4)$ are invariant under left multiplication of $C(z)$ by diagonal matrices. However, multiplication by diagonal matrices from the right has the effect of scaling $\rho\mapsto c\, \rho$, for some $c\in\mathbb{C}^*$.
Therefore, the Tyurin parameters $\rho$ naturally lie in $(\mathbb{P}^1)^4/\mathbb{C}^*$.


In \cite{jr_qp6}, it is shown that the Tyurin parameters satisfy the following homogeneous multilinear equation, written in inhomogeneous coordinates,
\begin{align}\label{eq:tyurineq}
    &T(\rho)=0,\\
&T(\rho):=T_{12}\,\rho_1\rho_2+T_{13}\,\rho_1\rho_3+T_{14}\,\rho_1\rho_4+T_{23}\,\rho_2\rho_3+T_{24}\,\rho_2\rho_4+T_{34}\,\rho_3\rho_4,\nonumber
\end{align}
with coefficients given by
\begin{align*}
T_{12}&= \theta_q\left(\kappa_t^2,\kappa_1^2\right)\theta_q\left(\kappa_0\kappa_\infty^{-1}t,\kappa_0^{-1}\kappa_\infty^{-1}t\right)\kappa_\infty^2,\\
T_{34}&= \theta_q\left(\kappa_t^2,\kappa_1^2\right)\theta_q\left(\kappa_0\kappa_\infty t,\kappa_0^{-1}\kappa_\infty t\right),\\
T_{13}&=- \theta_q\left(\kappa_t\kappa_1^{-1}t,\kappa_t^{-1}\kappa_1t\right)\theta_q\left(\kappa_t\kappa_1\kappa_0^{-1}\kappa_\infty^{-1},\kappa_0\kappa_t\kappa_1\kappa_\infty^{-1}\right)\kappa_\infty^2,\\
T_{24}&=-\theta_q\left(\kappa_t\kappa_1^{-1}t,\kappa_t^{-1}\kappa_1t\right) \theta_q\left(\kappa_0\kappa_t\kappa_1\kappa_\infty,\kappa_t\kappa_1\kappa_\infty\kappa_0^{-1}\right),\\
T_{14}&= \theta_q\left(\kappa_t\kappa_1t,\kappa_t^{-1}\kappa_1^{-1}t\right)\theta_q\left(\kappa_1\kappa_\infty\kappa_0^{-1}\kappa_t^{-1},\kappa_0\kappa_1\kappa_\infty\kappa_t^{-1}\right)\kappa_t^2,\\
T_{23}&= \theta_q\left(\kappa_t\kappa_1t,\kappa_t^{-1}\kappa_1^{-1}t\right)\theta_q\left(\kappa_t\kappa_\infty\kappa_0^{-1}\kappa_1^{-1},\kappa_0\kappa_t\kappa_\infty\kappa_1^{-1}\right)\kappa_1^2.
\end{align*}
In homogeneous coordinates $\rho_k=[\rho_k^x: \rho_k^y]\in \mathbb{ P}^1$, $1\le k\le 4$, equation \eqref{eq:tyurineq} reads
\begin{align*}
0=\,&T_{12}\rho_1^x\rho_2^x\rho_3^y\rho_4^y+T_{13}\rho_1^x\rho_2^y\rho_3^x\rho_4^y
+T_{14}\rho_1^x\rho_2^y\rho_3^y\rho_4^x+\\
&T_{23}\rho_1^y\rho_2^x\rho_3^x\rho_4^y+T_{24}\rho_1^y\rho_2^x\rho_3^y\rho_4^x+T_{34}\rho_1^y\rho_2^y\rho_3^x\rho_4^x.
\end{align*}

Furthermore, it is shown in \cite{jr_qp6} that the following inequality holds,
\begin{equation}\label{eq:tyurinnonzero}
   \widehat{T}(\rho)\neq 0,
\end{equation}
where $\widehat{T}(\rho)$ is the polynomial obtained by setting $\kappa_0= 1$ in $T(\rho)$.
In other words, if we denote $\widehat{T}_{ij}=T_{ij}|_{\kappa_0= 1}$, then
\begin{align*}
0\neq\,&\widehat{T}_{12}\rho_1^x\rho_2^x\rho_3^y\rho_4^y+\widehat{T}_{13}\rho_1^x\rho_2^y\rho_3^x\rho_4^y
+\widehat{T}_{14}\rho_1^x\rho_2^y\rho_3^y\rho_4^x+\\
&\widehat{T}_{23}\rho_1^y\rho_2^x\rho_3^x\rho_4^y+\widehat{T}_{24}\rho_1^y\rho_2^x\rho_3^y\rho_4^x+\widehat{T}_{34}\rho_1^y\rho_2^y\rho_3^x\rho_4^x.
\end{align*}
Equations \eqref{eq:tyurineq} and \eqref{eq:tyurinnonzero}
completely describe the possible values of the Tyurin parameters \cite[Theorem 2.15]{jr_qp6}.

Next, we consider the crystal limit of the Tyurin parameters. Due to equations \eqref{eq:connectioncrytallim} and \eqref{eq:connectioncrytallimat0}, they remain well-defined in the limit as $q\rightarrow 0$, and we have
\begin{equation*}
    \rho_k\xrightarrow{q\rightarrow 0}\rho_k^\diamond,\qquad 
    \rho_k^\diamond=\pi[A(x_k)],
\end{equation*}
for $1\leq k\leq 4$, where we used that $\pi[\cdot]$ is invariant under left-multiplication by invertible matrices.
Now, equation \eqref{eq:tyurineq} continues to hold, and, due to the limiting behaviour of $\theta_q(\cdot)$ in \eqref{eq:theta_crystal}, its coefficients simplify to rational functions with respect to the parameters $(\kappa,t)$ as $q\rightarrow 0$. Namely
\begin{equation*}
    T_{ij}\xrightarrow{q\rightarrow 0}T_{ij}^\diamond,
\end{equation*}
for $1\leq i<j\leq 4$, where
\begin{equation}\label{eq:Tcoefq0}
\begin{aligned}
T_{12}^\diamond&=(\kappa_t^2-1)(\kappa_1^2-1)(\kappa_0t-\kappa_\infty)(\kappa_0^{-1}t-\kappa_\infty),\\
T_{34}^\diamond&=(\kappa_t^2-1)(\kappa_1^2-1)(\kappa_0\kappa_\infty t-1)(\kappa_0^{-1}\kappa_\infty t-1),\\
T_{13}^\diamond&= -(\kappa_t\kappa_1^{-1}t-1)(\kappa_t^{-1}\kappa_1t-1)(\kappa_0\kappa_t\kappa_1-\kappa_\infty)(\kappa_0^{-1}\kappa_t\kappa_1-\kappa_\infty),\\
T_{24}^\diamond&= -(\kappa_t\kappa_1^{-1}t-1)(\kappa_t^{-1}\kappa_1t-1)(\kappa_0\kappa_t\kappa_1\kappa_\infty-1)(\kappa_0^{-1}\kappa_t\kappa_1\kappa_\infty-1),\\
T_{14}^\diamond&=(\kappa_t\kappa_1t-1)(\kappa_t^{-1}\kappa_1^{-1}t-1)(\kappa_0\kappa_1\kappa_\infty-\kappa_t)(\kappa_0^{-1}\kappa_1\kappa_\infty-\kappa_t),\\
T_{23}^\diamond&=(\kappa_t\kappa_1t-1)(\kappa_t^{-1}\kappa_1^{-1}t-1)(\kappa_0\kappa_t\kappa_\infty-\kappa_1)(\kappa_0^{-1}\kappa_t\kappa_\infty-\kappa_1).
\end{aligned}
\end{equation}
Similarly, the coefficients $\widehat{T}_{ij}$ in inequality \eqref{eq:tyurinnonzero} simplify to rational functions $\widehat{T}_{ij}^\diamond$ at $q=0$ and
$\widehat{T}_{ij}^\diamond=T_{ij}^\diamond|_{\kappa_0= 1}$.
Using the explicit parametrisation of $A$ in terms of $(f,g,w)$, we obtain the following expressions for the Tyurin parameters at $q=0$,
\begin{equation}\label{eq:rhocrystal}
\begin{aligned}
    \rho_1^\diamond=\,&\frac{P_1(f,g)}{w f g(\kappa_\infty^2-1)}, &&&     \rho_3^\diamond=\,&\frac{P_3(f,g)}{w f g(\kappa_\infty^2-1)(f-\kappa_1)},\\
    \rho_2^\diamond=\,&\frac{P_2(f,g)}{w f g(\kappa_\infty^2-1)},
&&&
    \rho_4^\diamond=\,&\frac{P_4(f,g)}{w f g(\kappa_\infty^2-1)(f-\kappa_1^{-1})},
\end{aligned}
\end{equation}
where
\begin{equation}\label{eq:Pdefi}
\begin{aligned}
    P_1(f,g)=\,&\kappa_\infty^3(g-\kappa_0 t)(g-\kappa_0^{-1}t)
    +\kappa_\infty( \kappa_\infty g-1)^2f^2\\
    &-\kappa_\infty^2 f(g-1/\kappa_\infty)(\kappa_\infty(\kappa_1+\kappa_1^{-1})g-t(\kappa_\infty^2 \kappa_t+\kappa_t^{-1})),\\
    P_2(f,g)=\,&\kappa_\infty^3(g-\kappa_0 t)(g-\kappa_0^{-1}t)+\kappa_\infty( \kappa_\infty g-1)^2f^2,\\
    &-
    \kappa_\infty^2 f(g-1/\kappa_\infty)(\kappa_\infty(\kappa_1+\kappa_1^{-1})g-t(\kappa_\infty^2 \kappa_t^{-1}+\kappa_t))\\    
    P_3(f,g)=\,&\kappa_\infty^3g^2(f-\kappa_1)^2(f-\kappa_1^{-1})-\kappa_\infty^2g(f-\kappa_1)Q(f,\kappa_1)\\
    &+\kappa_\infty(f-\kappa_1 \kappa_\infty)(f-\kappa_t t)(f-\kappa_t^{-1} t),\\
    P_4(f,g)=\,&\kappa_\infty^3g^2(f-\kappa_1^{-1})^2(f-\kappa_1)-\kappa_\infty^2g(f-\kappa_1^{-1})Q(f,\kappa_1^{-1})\\
    &+\kappa_\infty(f-\kappa_1^{-1} \kappa_\infty)(f-\kappa_t t)(f-\kappa_t^{-1} t),
\end{aligned}
\end{equation}
with
\begin{equation*}
    Q(f,\kappa_1):=(f-\kappa_t t)(f-\kappa_t^{-1} t)+(f-\kappa_1^{-1})(f-\kappa_1 \kappa_\infty^2)-(t-\kappa_0 \kappa_\infty)(t-\kappa_0^{-1}\kappa_\infty).
\end{equation*}
One can check directly that equation \eqref{eq:tyurineq} continues to hold at $q=0$, by substitution of formulas \eqref{eq:rhocrystal} for the Tyurin parameters.

Note that the auxiliary variable $w$ in the linear system traces out an orbit in $(\mathbb{P}^1)^4$, with respect to scalar multiplication, when varied in $\mathbb{C}^*$.
Thus, considering the Tyurin parameters as elements of $(\mathbb{P}^1)^4/\mathbb{C}^*$, we obtain a mapping from the initial value space to this quotient space,
\begin{equation}\label{eq:coordinate_mapping_init}
    \mathcal{X}_t\rightarrow (\mathbb{P}^1)^4/\mathbb{C}^*,\,(f,g)\mapsto [\rho],
\end{equation}
for nonzero $q$ and for $q=0$.

Whilst for nonzero $q$, this mapping is expected to be higher-order transcendental, it becomes rational at $q=0$, as manifest by the explicit formulas above for $\rho^\diamond$.

There is another interesting phenomenon happening at $q=0$. For $q\neq 0$, inequality \eqref{eq:tyurinnonzero} holds for any point in the initial value space $\mathcal{X}_t$. However, at $q=0$, this inequality is violated on the exceptional curve $E_8$.
To see this, we use the local coordinates $(u_8,v_8)$ in \eqref{eq:exceptional_para}. Direct substitution into the formulas for $\rho^\diamond$ above gives
\begin{equation*}
    \rho_k^\diamond=\frac{\kappa_\infty}{(\kappa_\infty^2-1)\,v_8\, w\, u_8^2}+\mathcal{O}(u_8^{-1})\qquad (u_8\rightarrow 0),
\end{equation*}
for $1\leq k\leq 4$. This means that the mapping \eqref{eq:coordinate_mapping_init} sends $E_8$ onto the single point 
\begin{equation}\label{eq:rho_sing}
    [(1,1,1,1)]\in(\mathbb{P}^1)^4/\mathbb{C}^*.
\end{equation}
This point satisfies equality \eqref{eq:tyurineq} with $q=0$, for any value of $\kappa_0$, including $\kappa_0=1$, and thus violates inequality \eqref{eq:tyurinnonzero}. We further note that inequality \eqref{eq:tyurinnonzero} is satisfied away from $E_8$.


\subsection{An affine Segre surface and the Riemann-Hilbert correspondence}\label{subsec:segre}
In this section, we recall the construction of an affine variety in \cite[Section 2.4]{jr_qp6} naturally associated with the monodromy manifold of $q\Psix$. This construction relies on equality \eqref{eq:tyurineq} and inequality \eqref{eq:tyurinnonzero} satisfied by the Tyurin parameters.

Take $1\leq i<j\leq 4$ and define
\begin{align}
     \eta_{ij}:&=\frac{T_{ij}\,\rho_i\rho_j}{\theta_q(\kappa_0,\kappa_0^{-1})\widehat{T}(\rho)}\label{eq:eta_defi}\\
     &=\frac{1}{\theta_q(\kappa_0,\kappa_0^{-1})}\frac{T_{ij}\,\rho_i^{x}\rho_j^{x}\rho_k^{y}\rho_l^{y}}{\widehat{T}_{12}\,\rho_1^{x}\rho_2^{x}\rho_3^{y}\rho_4^{y}+\widehat{T}_{13}\,\rho_1^{x}\rho_2^{y}\rho_3^{x}\rho_4^{y}+\ldots+\widehat{T}_{34}\,\rho_1^{y}\rho_2^{y}\rho_3^{x}\rho_4^{x}}\nonumber
\end{align}
where $k,l$ are such that $\{i,j,k,l\}=\{1,2,3,4\}$.

The $\eta_{ij}$, $1\leq i<j\leq 4$, are six well-defined global coordinates on the monodromy manifold $\mathcal{M}_t$, defined in Definition \ref{def:monodromy_manifold}.
They satisfy the following four equations,
\begin{subequations}\label{eq:eta_equations}
\begin{align}
    &\eta_{12}+\eta_{13}+\eta_{14}+\eta_{23}+\eta_{24}+\eta_{34}=0,\label{eq:eta_equationsa}\\
    &a_{12}\,\eta_{12}+a_{13}\,\eta_{13}+a_{14}\,\eta_{14}+a_{23}\,\eta_{23}+a_{24}\,\eta_{24}+a_{34}\,\eta_{34}=1,\label{eq:eta_equationsb}\\
    &\eta_{13}\,\eta_{24}-b_1\,\eta_{12}\,\eta_{34}=0,\label{eq:eta_equationsc}\\ 
    &\eta_{14}\,\eta_{23}-b_2\,\eta_{12}\,\eta_{34}=0,\label{eq:eta_equationsd}
\end{align}
\end{subequations}
where the coefficients $a_{ij}=\widehat{T}_{ij}/T_{ij}$, $1\leq i<j\leq 4$, read
\begin{align*}
    &a_{12}=\prod_{\epsilon=\pm 1}\frac{\theta_q\big(\kappa_0^\epsilon\big)\theta_q\big(\kappa_\infty^{-1}t\big)}{\theta_q\big(\kappa_0^{\epsilon}\kappa_\infty^{-1}t\big)}, &
    &a_{34}=\prod_{\epsilon=\pm 1}\frac{\theta_q\big(\kappa_0^\epsilon\big)\theta_q\big(\kappa_\infty t\big)}{\theta_q\big(\kappa_0^{\epsilon}\kappa_\infty t\big)},\\
    &a_{13}=\prod_{\epsilon=\pm 1}\frac{\theta_q\big(\kappa_0^\epsilon\big)\theta_q\big(\kappa_t\kappa_1\kappa_\infty^{-1}\big)}{\theta_q\big(\kappa_0^{\epsilon}\kappa_t\kappa_1\kappa_\infty^{-1}\big)}, &
    &a_{24}=\prod_{\epsilon=\pm 1}\frac{\theta_q\big(\kappa_0^\epsilon\big)\theta_q\big(\kappa_t\kappa_1\kappa_\infty \big)}{\theta_q\big(\kappa_0^{\epsilon}\kappa_t\kappa_1\kappa_\infty \big)},\\
    &a_{14}=\prod_{\epsilon=\pm 1}\frac{\theta_q\big(\kappa_0^\epsilon\big)\theta_q\big(\kappa_t^{-1}\kappa_1\kappa_\infty\big)}{\theta_q\big(\kappa_0^{\epsilon}\kappa_t^{-1}\kappa_1\kappa_\infty\big)}, &
    &a_{23}=\prod_{\epsilon=\pm 1}\frac{\theta_q\big(\kappa_0^\epsilon\big)\theta_q\big(\kappa_t\kappa_1^{-1}\kappa_\infty \big)}{\theta_q\big(\kappa_0^{\epsilon}\kappa_t\kappa_1^{-1}\kappa_\infty \big)},\\
\end{align*}
and
\begin{equation}\label{eq:b1b2defi}
    b_1=\frac{T_{13}T_{24}}{T_{12}T_{34}},\qquad b_2=\frac{T_{14}T_{23}}{T_{12}T_{34}}.
\end{equation}

\begin{definition}\label{def:affine_variety}
We denote by $\mathcal{F}_t$ the affine Segre surface in
\begin{equation*}
    \{\eta=(\eta_{12},\eta_{13},\eta_{14},\eta_{23},\eta_{24},\eta_{34})\in\mathbb{C}^6\}
\end{equation*}
defined by equations \eqref{eq:eta_equations}.
\end{definition}


We now have all the ingredients to define the Riemann-Hilbert correspondence in this context.
\begin{definition}\label{def:rh}
For $q\neq 0$, we define the mapping
\begin{equation*}
   \mathrm{RH}_t:\mathcal{X}_t\rightarrow \mathcal{F}_t,(f,g)\mapsto \eta,
\end{equation*}
which associates to any $(f,g)$, not on the exceptional curve $E_8$, the $\eta$-coordinates of its corresponding connection matrix $C(z)$ via the linear problem \eqref{eq:linear_problem}. If $(f,g)$ lies on the exceptional curve $E_8$, then $(\overline{f},\overline{g})\in \mathcal{X}_{qt}$ does not lie on $E_8$ and we define
\begin{equation}\label{eq:RHinvariance}
    \mathrm{RH}_t(f,g)=\mathrm{RH}_{qt}(\overline{f},\overline{g}),
\end{equation}
where we note that $\mathcal{F}_{qt}= \mathcal{F}_t$.
\end{definition}

We recall the following important facts concerning the mapping $\mathrm{RH}_t$, see \cite[Proposition 2.6]{rofqpviasymptotics}.
\begin{proposition} \label{prop:rh}
Let $q\neq 0$ and assume the non-resonance conditions \eqref{eq:non_res} and non-splitting conditions \eqref{eq:intro_irreducibleparameter}. Then, the mapping $\mathrm{RH}_t$ is a bi-holomorphism. It further commutes with the $q\Psix$ time-evolution, in the sense that equation \eqref{eq:RHinvariance} holds for all $(f,g)\in \mathcal{X}_t$.
\end{proposition}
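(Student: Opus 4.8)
The plan is to prove the three assertions --- holomorphy, bijectivity with holomorphic inverse, and equivariance under the $q\Psix$ flow --- by factoring $\mathrm{RH}_t$ through the space of coefficient matrices and reducing everything to the analytic theory of the connection matrix $C(z)=\Psi_0(z)^{-1}\Psi_\infty(z)$. Away from the exceptional curve $E_8$, Lemma \ref{lem:iso_init_to_linear} already identifies $\mathfrak{X}_t=\mathcal{X}_t\setminus E_8$ with $\mathcal{A}_t/\hspace{-0.5mm}\sim$ via $(f,g)\mapsto[A(z)]$, so it suffices to analyse the composite $\mathcal{A}_t/\hspace{-0.5mm}\sim\ \to\mathcal{M}_t\to\mathcal{F}_t$, where the first arrow sends $[A(z)]$ to its connection class $[C(z)]$ and the second is the coordinate map $C\mapsto\eta$ of \eqref{eq:eta_defi}. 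On $E_8$ the map is defined by \eqref{eq:RHinvariance}, which I will treat at the end using the isomorphism $\overline{\mathcal{X}}_t\to\overline{\mathcal{X}}_{qt}$ induced by the flow.

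For holomorphy I would first show that the canonical solutions depend holomorphically on the entries of $A(z)$. Fixing $q\neq0$ and varying $A$, Carmichael's series \eqref{eq:psiinfseries} for $\Psi_\infty$ has coefficients $V_n$ solving the recurrence \eqref{eq:recurrence}; repeating the geometric majorant estimate from the proof of Proposition \ref{prop:crystal_canonical_sol} (now with $q$ frozen and the bound $\alpha$ chosen locally uniformly in $A$) shows that the series converges locally uniformly and hence defines an analytic function of $(z,A)$, and symmetrically for $\Psi_0$. Therefore $C(z)$, the Tyurin parameters $\rho_k=\pi(C(x_k))$, and the coordinates $\eta_{ij}$ are holomorphic wherever the denominator $\widehat{T}(\rho)$ in \eqref{eq:eta_defi} is nonzero --- which is exactly the content of the inequality \eqref{eq:tyurinnonzero}. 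Composing with the isomorphism of Lemma \ref{lem:iso_init_to_linear} gives holomorphy of $\mathrm{RH}_t$ on $\mathfrak{X}_t$.

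The equivariance \eqref{eq:RHinvariance} is the isomonodromy property of the $q\Psix$ flow. Concretely, the gauge matrix $B(z)$ satisfies $\overline{A}(z)=B(qz)A(z)B(z)^{-1}$, so if $Y$ solves \eqref{eq:linear_problem} for $A$ then $B(z)Y(z)$ solves it for $\overline{A}$. Since $B(z)\to I$ as $z\to\infty$ and $B(z)$ is diagonal-valued in the limits governing the normalisations at $0$ and $\infty$, comparing $B(z)Y_0$ and $B(z)Y_\infty$ with the canonical solutions of the evolved system shows that the connection matrix transforms as $\overline{C}(z)=D_\ell\,C(z)\,D_r$ for invertible diagonal $D_\ell,D_r$. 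Hence $[\overline{C}]=[C]$ in $\mathcal{M}_t$ and the $\eta$-coordinates agree, giving $\mathrm{RH}_{qt}(\overline{f},\overline{g})=\mathrm{RH}_t(f,g)$ on $\mathfrak{X}_t$. This simultaneously settles the map on $E_8$: there $(\overline{f},\overline{g})\notin E_8$, so near $E_8$ the map $\mathrm{RH}_t=\mathrm{RH}_{qt}\circ(\text{flow})$ is holomorphic as a composition of holomorphic maps.

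The main obstacle is bijectivity together with holomorphy of the inverse, which amounts to solving the inverse problem: given $\eta\in\mathcal{F}_t$, equivalently a connection matrix $C(z)$ with properties (1)--(3), reconstruct an analytic pair $(\Psi_0,\Psi_\infty)$ with the prescribed behaviour at $0$ and $\infty$ satisfying $C=\Psi_0^{-1}\Psi_\infty$, and thereby recover $A(z)=z^2\,\Psi_\infty(qz)\,\kappa_\infty^{\sigma_3}\,\Psi_\infty(z)^{-1}$ and finally $(f,g)$ from \eqref{eq:rational_inverse}. I expect the heart of the argument to be a solvability-and-uniqueness result for this Riemann-Hilbert problem, depending holomorphically on $\eta$: the non-resonance conditions \eqref{eq:non_res} ensure that the local formal solutions at $0$ and $\infty$ exist and are unique, while the non-splitting (irreducibility) conditions \eqref{eq:intro_irreducibleparameter} feed a vanishing lemma that forces uniqueness of the solution and rules out reducible $C$. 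Once existence, uniqueness and analytic dependence on $\eta$ are established, the resulting map $\mathcal{F}_t\to\mathcal{X}_t$ is a holomorphic two-sided inverse of $\mathrm{RH}_t$, completing the proof that $\mathrm{RH}_t$ is a bi-holomorphism.
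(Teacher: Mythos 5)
A preliminary remark: the paper does not prove this proposition at all; it is recalled from \cite[Proposition 2.6]{rofqpviasymptotics}, so there is no in-paper argument to compare yours against line by line. Judged on its own, your outline follows the natural strategy of the cited works: factor $\mathrm{RH}_t$ through $\mathcal{A}_t/\hspace{-0.5mm}\sim$ via Lemma \ref{lem:iso_init_to_linear}, obtain holomorphy from the Carmichael series with estimates locally uniform in $A$, get equivariance from the gauge relation \eqref{eq:time_evolution}, and invert by solving a Riemann--Hilbert problem. One small inaccuracy in the equivariance step: the evolved connection matrix cannot satisfy $\overline{C}(z)=D_\ell\,C(z)\,D_r$ with \emph{constant} diagonal matrices, because $C$ and $\overline{C}$ obey different $q$-difference equations in $z$ (property (2) carries the factor $t/z^2$ versus $qt/z^2$); the correct transformation involves an additional scalar $z$-dependent factor. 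This is harmless, since the $\eta$-coordinates are invariant under scalar multiplication of $C(z)$, but your statement as written is false.

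The genuine gap is in the bijectivity argument. You propose to establish ``existence, uniqueness and analytic dependence on $\eta$'' for the inverse Riemann--Hilbert problem at the fixed time $t$, and then declare the resulting map $\mathcal{F}_t\to\mathcal{X}_t$ a two-sided inverse. Existence in this blanket form is false: by Remark \ref{rem:special} (i.e.\ \cite[Theorem 2.12]{jr_qp6}), the Riemann--Hilbert problem with data $\eta$ is unsolvable at time $t$ exactly when the sought point $(f,g)$ lies on the exceptional curve $E_8$ --- this is precisely why Definition \ref{def:rh} is forced to define $\mathrm{RH}_t$ on $E_8$ through the flow in the first place. So your inverse is simply undefined on the non-solvability locus of $\mathcal{F}_t$; to complete it one must show that when the problem fails at time $t$ it is solvable at time $qt$, solve there, and pull the answer back through the flow isomorphism $\overline{\mathcal{X}}_{qt}\to\overline{\mathcal{X}}_t$ --- and none of this is addressed in your sketch, which only handles the $E_8$ issue for the forward map. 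A second, related elision: the phrase ``given $\eta\in\mathcal{F}_t$, equivalently a connection matrix $C(z)$ with properties (1)--(3)'' hides the nontrivial surjectivity of $[C(z)]\mapsto\eta$; constructing a connection matrix realising prescribed $\eta$-coordinates is the content of \cite[Theorems 2.15 and 2.20]{jr_qp6} and is done by a Mano-type decomposition, the same device the present paper deploys at $q=0$ in the proof of Theorem \ref{thm:main}. Without these two ingredients the bijectivity claim does not close.
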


\subsection{Crystal limit of Segre surface and Riemann-Hilbert correspondence}\label{subsec:segre_crystal}
The construction of the Segre surface $\mathcal{F}_t$ remains completely well-defined when we set $q=0$. Due to \eqref{eq:theta_crystal}, its coefficients simplify to rational functions in $(\kappa,t)$ as $q\rightarrow 0$,
\begin{equation*}
    a_{ij}\xrightarrow{q\rightarrow 0} a_{ij}^\diamond, \qquad b_{i}\xrightarrow{q\rightarrow 0} b_{i}^\diamond,
\end{equation*}
where
\begin{align*}
    &a_{12}^\diamond=\frac{(\kappa_0-1)^2(t-\kappa_\infty)^2}{(\kappa_0t-\kappa_\infty)(\kappa_0\kappa_\infty-t)}
    &
    &a_{34}^\diamond=\frac{(\kappa_0-1)^2(\kappa_\infty t-1)^2}{(\kappa_0\kappa_\infty t-1)(\kappa_0-\kappa_\infty t)},\\
    &a_{13}^\diamond=\frac{(\kappa_0-1)^2(\kappa_\infty-\kappa_t\kappa_1)^2}{(\kappa_0\kappa_t\kappa_1-\kappa_\infty)(\kappa_0\kappa_\infty-\kappa_t\kappa_1)},
    &
    &a_{24}^\diamond=\frac{(\kappa_0-1)^2(\kappa_t\kappa_1\kappa_\infty-1)^2}{(\kappa_0\kappa_t\kappa_1\kappa_\infty-1)(\kappa_0-\kappa_t\kappa_1\kappa_\infty)},\\
    &a_{14}^\diamond=\frac{(\kappa_0-1)^2(\kappa_t-\kappa_1\kappa_\infty)^2}{(\kappa_0\kappa_1\kappa_\infty-\kappa_t)(\kappa_0\kappa_t-\kappa_1\kappa_\infty)}, 
    &
    &a_{23}^\diamond=\frac{(\kappa_0-1)^2(\kappa_1-\kappa_t\kappa_\infty)^2}{(\kappa_0\kappa_t\kappa_\infty-\kappa_1)(\kappa_0\kappa_1-\kappa_t\kappa_\infty)},
\end{align*}
and
\begin{align*}
    b_1^\diamond=&\frac{( \kappa_t t-\kappa_1)^2(\kappa_1 t-\kappa_t)^2}{\kappa_t^2\kappa_1^2(\kappa_t^2-1)^2(\kappa_1^2-1)^2}\\
    &\times \frac{(\kappa_0 \kappa_\infty-\kappa_t \kappa_1)(\kappa_0\kappa_t\kappa_1-\kappa_\infty)(\kappa_0-\kappa_t\kappa_1\kappa_\infty)(\kappa_0\kappa_t\kappa_1\kappa_\infty-1)}{(\kappa_0 t-\kappa_\infty)(\kappa_0-\kappa_\infty t)(\kappa_0\kappa_\infty-t)(\kappa_0\kappa_\infty t-1)},\\
    b_2^\diamond=&\frac{(  t-\kappa_t\kappa_1)^2(\kappa_t\kappa_1 t-1)^2}{\kappa_t^2\kappa_1^2(\kappa_t^2-1)^2(\kappa_1^2-1)^2}\\
    &\times \frac{(\kappa_0 \kappa_t-\kappa_1 \kappa_\infty)(\kappa_0\kappa_1\kappa_\infty-\kappa_t)(\kappa_0\kappa_1-\kappa_t\kappa_\infty)(\kappa_0\kappa_t\kappa_\infty-\kappa_1)}{(\kappa_0 t-\kappa_\infty)(\kappa_0-\kappa_\infty t)(\kappa_0\kappa_\infty-t)(\kappa_0\kappa_\infty t-1)}.
\end{align*}
We denote the limiting Segre surface, with coefficients as above, by $\mathcal{F}_t^{\diamond}$.

In \cite{jmr_segre}, it is shown that the Segre surface $\mathcal{F}_t$, with $q\neq 0$, is a completely generic embedded affine Segre surface, for generic values of the parameters.  In particular, its projective completion $\overline{\mathcal{F}}_t\subseteq \mathbb{P}^6$, defined through projective coordinates
 \begin{equation*}
\left[N_\infty:N_{12}:N_{13}:N_{14}:N_{23}:N_{24}:N_{34}\right]=
\left[1:\eta_{12}:\eta_{13}:\eta_{14}:\eta_{23}:\eta_{24}:\eta_{34}\right],
 \end{equation*}
is smooth and the curve at infinity, $\overline{\mathcal{F}}_t\setminus \mathcal{F}_t$, is a smooth irreducible quartic curve, isomorphic to the intersection of two quadric surfaces in $\mathbb{P}^3$, of genus $1$.

 However, under the crystal limit, the embedded affine Segre surface degenerates slightly as the curve at infinity is no longer smooth when $q=0$, as shown in the following proposition.
\begin{proposition}\label{prop:algebraic_characterisation}
The projective completion $\overline{\mathcal{F}}_t^\diamond$ of the limiting Segre surface is smooth and the curve at infinity, $\overline{\mathcal{F}}_t^\diamond\setminus \mathcal{F}_t^\diamond$, is a singular quartic curve, with a singularity at
    \begin{equation}\label{eq:projective}
      N_*=[0:T_{12}^\diamond:T_{13}^\diamond:T_{14}^\diamond:T_{23}^\diamond:T_{24}^\diamond:T_{34}^\diamond].    
    \end{equation}
\end{proposition}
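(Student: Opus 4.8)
The plan is to analyze the projective completion $\overline{\mathcal{F}}_t^\diamond \subseteq \mathbb{P}^6$ directly from the defining equations \eqref{eq:eta_equations}, homogenized via the coordinates $[N_\infty:N_{12}:\cdots:N_{34}]$. First I would write down the homogeneous forms of the four defining equations: the linear equation \eqref{eq:eta_equationsa} becomes $N_{12}+N_{13}+N_{14}+N_{23}+N_{24}+N_{34}=0$, the affine-linear equation \eqref{eq:eta_equationsb} homogenizes to $\sum a_{ij}^\diamond N_{ij} = N_\infty$, and the two quadratic equations \eqref{eq:eta_equationsc}, \eqref{eq:eta_equationsd} are already homogeneous of degree two, reading $N_{13}N_{24}-b_1^\diamond N_{12}N_{34}=0$ and $N_{14}N_{23}-b_2^\diamond N_{12}N_{34}=0$. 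The curve at infinity $\overline{\mathcal{F}}_t^\diamond\setminus\mathcal{F}_t^\diamond$ is cut out by intersecting this system with the hyperplane $N_\infty=0$, which collapses \eqref{eq:eta_equationsb} to the homogeneous linear relation $\sum a_{ij}^\diamond N_{ij}=0$.

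\emph{Smoothness of $\overline{\mathcal{F}}_t^\diamond$.} I would first verify that the affine part is smooth by a Jacobian computation on \eqref{eq:eta_equations}, then check smoothness along the curve at infinity. The cleanest route is to observe (following \cite{jmr_segre}) that $\overline{\mathcal{F}}_t^\diamond$ sits inside the $\mathbb{P}^3$ obtained by using the two linear relations to eliminate two of the variables, so that it is realised as the intersection of two quadrics in $\mathbb{P}^3$; a smooth Segre surface is exactly such a $(2,2)$-complete intersection. The point is that \emph{the surface itself can remain smooth even though its hyperplane section at infinity acquires a node}, so I would separate the two computations: a Jacobian-rank check for the surface, and a separate singularity analysis for the curve.

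\emph{The singular point of the curve at infinity.} Here I would substitute the candidate point $N_*$ from \eqref{eq:projective} and verify three things: (i) that $N_*$ lies on all four homogeneous equations, (ii) that it lies on the hyperplane $N_\infty=0$ (immediate, since its first coordinate is $0$), and (iii) that it is a singular point of the quartic curve, by showing the Jacobian of the three relevant homogeneous forms (the two linear ones plus the two quadrics, restricted to the curve) drops rank there. The key structural input is the identity $T_{13}^\diamond T_{24}^\diamond = b_1^\diamond\, T_{12}^\diamond T_{34}^\diamond$ and $T_{14}^\diamond T_{23}^\diamond = b_2^\diamond\, T_{12}^\diamond T_{34}^\diamond$, which follow directly from the definitions \eqref{eq:b1b2defi} of $b_1^\diamond,b_2^\diamond$; these guarantee that $N_*$ satisfies the two quadratic equations. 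For the two linear equations at $N_*$, I would use equation \eqref{eq:tyurineq} specialized to the crystal limit — the vanishing $\sum_{i<j} T_{ij}^\diamond = 0$ is precisely the statement that the violating point $[(1,1,1,1)]$ detected in \S\ref{subsec:tyurin} (see \eqref{eq:rho_sing}) lies on the surface at infinity, tying this proposition back to the degeneration of inequality \eqref{eq:tyurinnonzero}.

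\emph{Main obstacle.} The hardest part will be the explicit verification that the singularity at $N_*$ is a genuine node (an ordinary double point) rather than a smooth point or a worse singularity, and the confirmation that the surface remains smooth there despite the curve being singular. This requires computing the rank of the full Jacobian of the homogenized system \eqref{eq:eta_equations} at $N_*$ and showing it equals $4$ (smoothness of the surface) while the Jacobian of the system restricted to $\{N_\infty=0\}$ drops to rank $3$ (singularity of the curve). Because the coefficients $a_{ij}^\diamond$, $b_i^\diamond$, $T_{ij}^\diamond$ are intricate rational functions of $(\kappa,t)$, I expect this to be the calculational crux; the non-resonance and non-splitting conditions \eqref{eq:non_rescrystal}, \eqref{eq:intro_irreducibleparametercrystal} should be exactly what guarantees the generic non-vanishing needed to pin down the rank and to ensure $N_*$ is the \emph{only} singularity of the curve.
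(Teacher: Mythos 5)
Your treatment of the curve at infinity is essentially the paper's own argument: homogenize \eqref{eq:eta_equations}, identify the curve as the degree-four hyperplane section $\{N_\infty=0\}$, check that $N_*$ satisfies the two quadrics via the identities $T_{13}^\diamond T_{24}^\diamond=b_1^\diamond\,T_{12}^\diamond T_{34}^\diamond$ and $T_{14}^\diamond T_{23}^\diamond=b_2^\diamond\,T_{12}^\diamond T_{34}^\diamond$ from \eqref{eq:b1b2defi}, check the two linear equations at $N_*$, and detect the singularity by a rank drop of the Jacobian of the restricted system. That is exactly what the paper does, exhibiting an explicit left null vector $u$ of the $4\times 6$ Jacobian at $N_*$ and finishing with a local analysis showing $N_*$ is a double point. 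One small imprecision in your step for the linear equations: the first one requires $\sum_{i<j}T_{ij}^\diamond=0$, i.e.\ the crystal limit of \eqref{eq:tyurineq} at the point \eqref{eq:rho_sing}, whereas the second requires $\sum_{i<j}a_{ij}^\diamond T_{ij}^\diamond=0$, which (since $a_{ij}^\diamond T_{ij}^\diamond$ equals $\widehat{T}_{ij}^\diamond$ up to a common nonzero factor) is the $\kappa_0=1$ specialization, i.e.\ precisely the violation of inequality \eqref{eq:tyurinnonzero}. You invoke both facts, but your wording conflates them; they are two distinct identities and both are needed.

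The genuine gap is in your smoothness argument for $\overline{\mathcal{F}}_t^\diamond$. First, the dimension count is off: the two linear relations cut $\mathbb{P}^6$ down to a $\mathbb{P}^4$, not a $\mathbb{P}^3$ (an intersection of two quadrics in $\mathbb{P}^3$ is a curve --- that is how the paper describes the curve at infinity, not the surface). More seriously, the assertion that ``a smooth Segre surface is exactly such a $(2,2)$-complete intersection'' is circular as a smoothness proof: \emph{every} surface cut out by \eqref{eq:eta_equations}, smooth or nodal, is a $(2,2)$ complete intersection in that $\mathbb{P}^4$, so realizing the surface in this form proves nothing; smoothness is an additional open condition on the pencil of quadrics which is what has to be verified. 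The paper does not perform an unstructured Jacobian-rank computation in the parameters; it imports from the proof of \cite[Proposition 2.6]{rofqpviasymptotics} that smoothness hinges on the single non-vanishing condition
\begin{equation*}
    (b_1^\diamond-b_2^\diamond)^2-2(b_1^\diamond+b_2^\diamond)+1\neq 0,
\end{equation*}
and then verifies this by an explicit factorization of the left-hand side into linear factors in $t$ and the $\kappa$'s, which is manifestly nonzero under \eqref{eq:non_rescrystal} and \eqref{eq:intro_irreducibleparametercrystal}. Without identifying this criterion (or the equivalent discriminant condition on the pencil of quadrics), the step you yourself flag as the calculational crux remains unresolved in your plan. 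Finally, note that the proposition does not claim $N_*$ is the \emph{only} singularity of the curve; in the paper that uniqueness is obtained afterwards (Remark \ref{rem:parametrisation}) from a rational parametrization of the curve at infinity induced by $\mathrm{RH}_t^\diamond$, not from the genericity conditions alone as you suggest.
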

\begin{proof}
Firstly, note that, due to conditions \eqref{eq:non_rescrystal} and \eqref{eq:intro_irreducibleparametercrystal}, the coefficients $b_1^\diamond$ and $b_2^\diamond$ are well-defined and non-zero. 

The surface $\overline{\mathcal{F}}_t^\diamond$ is defined by four equations. To prove its smoothness, we need to show that the corresponding Jacobian does not lose rank at any point on the surface. Without reproducing all the details from the proof of \cite[Proposition 2.6]{rofqpviasymptotics} for $\overline{\mathcal{F}}_t$, we recall that this hinged on the non-vanishing condition
\begin{equation*}
    (b_1-b_2)^2-2(b_1+b_2)+1\neq 0.
\end{equation*}
Under the crystal limit, this becomes
\begin{equation*}
    (b_1^\diamond-b_2^\diamond)^2-2(b_1^\diamond+b_2^\diamond)+1\neq 0.
\end{equation*}
From the definition of these coefficients, the left-hand side equals
\begin{equation*}
\begin{gathered}
   (b_1^\diamond-b_2^\diamond)^2-2(b_1^\diamond+b_2^\diamond)+1=\\
   \frac{\kappa_t^2\,\kappa_1^2(\kappa_0^2-1)^2(\kappa_\infty^2-1)^2}{\kappa_0^2\kappa_\infty^2(\kappa_t^2-1)^2(\kappa_1^2-1)^2}
   \cdot\frac{(t-\kappa_t\kappa_1)(t-\kappa_t^{-1}\kappa_1)(t-\kappa_t\kappa_1^{-1})(t-\kappa_t^{-1}\kappa_1^{-1})}{(t-\kappa_0\kappa_\infty)(t-\kappa_0^{-1}\kappa_\infty)(t-\kappa_0\kappa_\infty^{-1})(t-\kappa_0^{-1}\kappa_\infty^{-1})}.
   \end{gathered}
\end{equation*}
By assumptions \eqref{eq:non_rescrystal} and  \eqref{eq:intro_irreducibleparametercrystal}, this is clearly nonzero and thus $\overline{\mathcal{F}}_t^\diamond$ is smooth.

Using projective coordinates \eqref{eq:projective}, the curve at infinity is given by the hyperplane section $\overline{\mathcal{F}}_t^\diamond\cap\{N_\infty=0\}$ and thus described by
\begin{subequations}
\begin{align}
    &N_{12}+N_{13}+N_{14}+N_{23}+N_{24}+N_{34}=0,\label{eq:hyp1}\\
&a_{12}^\diamond \,N_{12}+a_{13}^\diamond \,N_{13}+a_{14}^\diamond \,N_{14}+a_{23}^\diamond \,N_{23}+a_{24}^\diamond \,N_{24}+a_{34}^\diamond \,N_{34}=0,\label{eq:hyp2}\\
    &N_{13}N_{24}-b_1^\diamond\,N_{12}N_{34}=0,\label{eq:hyp3}\\ 
    &N_{14}N_{23}-b_2^\diamond\,N_{12}N_{34}=0.\label{eq:hyp4}
\end{align}
\end{subequations}
Since the first two equations are linear and the second two quadratic, the curve at infinity is a curve of degree $4$.

The point $N_*$ is the image under the mapping $\rho\mapsto N$, defined through equation \eqref{eq:eta_defi}, of the point \eqref{eq:rho_sing}. It satisfies equations \eqref{eq:hyp3} and \eqref{eq:hyp4}, due the crystal limits of equations \eqref{eq:b1b2defi}. Direct computations shows that it also satisfies \eqref{eq:hyp1} and \eqref{eq:hyp2} and thus defines a point on the curve at infinity.

The Jacobian $J$ of the equations describing the curve at infinity is  given by taking partial derivatives of the left-hand sides of the above four equations, with respect to $(N_{12},N_{13},N_{14},N_{23},N_{24},N_{34})$, i.e., by
\begin{equation*}
    J=\begin{bmatrix}
        1 & 1 & 1 & 1 & 1 & 1 \\
        a_{12}^\diamond & a_{13}^\diamond & a_{14}^\diamond & a_{23}^\diamond & a_{24}^\diamond & a_{34}^\diamond \\
        -b_1^\diamond N_{34} & N_{24} & 0 & 0 & N_{13} & -b_1^\diamond N_{12} \\
        -b_2^\diamond N_{34} & 0 & N_{23} & N_{14} & 0 & -b_2^\diamond N_{12} \\
    \end{bmatrix}.
\end{equation*}
A direct computation shows that the rows of the Jacobian become linearly dependent at $N_*$, explicitly,
\begin{equation*}
    0=u\cdot J|_{N=N_*},
\end{equation*}
where $u=(u_1,u_2,u_3,u_4)$ is given by
\begin{align*}
    u_1&=\frac{1}{(\kappa_0-1)^{2}}, & u_3&=-\frac{\kappa_\infty}{\kappa_0}\frac{(t\kappa_t-\kappa_1)(t\kappa_1-\kappa_t)}{T_{13}^\diamond T_{24}^\diamond},\\
    u_2&=\frac{\kappa_0}{(\kappa_0-1)^{4}}, & u_4&=+\frac{\kappa_\infty}{\kappa_0}\frac{(t-\kappa_t\kappa_1)(t\kappa_t\kappa_1-1)}{T_{14}^\diamond T_{23}^\diamond}.
\end{align*}
A further local analysis shows that  $N_*$ is a double point on the curve and thus forms a singularity.
This finishes the proof of the proposition.
\end{proof}

In Remark \ref{rem:parametrisation}, we will obtain a rational parametrisation of the curve at infinity which further allows us to conclude that the curve at infinity is irreducible and that $N_*$ in Proposition \ref{prop:algebraic_characterisation} is the only singularity on it.



We now come to our main result, which shows that the Riemann-Hilbert correspondence becomes a rational mapping in $(f,g)\in\mathfrak{X}_t$ under the crystal limit.
\begin{theorem}\label{thm:main}
Upon fixing any $(f,g)\in \mathfrak{X}_t$, the Riemann-Hilbert correspondence defined in Definition \ref{def:rh}, evaluated at $(f,g)$, admits a power series expansion in $q$,
\begin{equation}\label{eq:RHexpansion}
     \mathrm{RH}_t(f,g)=\mathrm{RH}_t^\diamond(f,g)+\sum_{k=1}^\infty q^k R_k(f,g;t),
\end{equation}
which is absolutely convergent for small enough $q\in\mathbb{C}$, with  coefficients $R_k(f,g;t)$, $k\geq 1$, that are analytic in $(f,g)\in\mathfrak{X}_t$.\\
The leading order term,
\begin{equation}\label{eq:RH_leading_order}
    \mathrm{RH}_t^\diamond:\mathfrak{X}_t\rightarrow \mathcal{F}_t^\diamond, (f,g)\mapsto \eta^\diamond,
\end{equation}
is an isomorphism of algebraic varieties that admits the following explicit expression,
\begin{align*}
\eta_{12}^\diamond&=T_{12}^\diamond\,\frac{P_1(f,g)P_2(f,g)}{u\,f^2\, g}(f-\kappa_1)(f-\kappa_1^{-1}), & \eta_{34}^\diamond&=T_{34}^\diamond\,\frac{P_3(f,g)P_4(f,g)}{u\,f^2\, g},\\
\eta_{13}^\diamond&=T_{13}^\diamond\,\frac{P_1(f,g)P_3(f,g)}{u\,f^2\, g}(f-\kappa_1^{-1}),
 & \eta_{24}^\diamond&=T_{24}^\diamond\,\frac{P_2(f,g)P_4(f,g)}{u\,f^2\, g}(f-\kappa_1),\\
\eta_{14}^\diamond&=T_{14}^\diamond\,\frac{P_1(f,g)P_4(f,g)}{u\,f^2\, g}(f-\kappa_1), & 
\eta_{23}^\diamond&=T_{23}^\diamond\,\frac{P_2(f,g)P_3(f,g)}{u\,f^2\, g}(f-\kappa_1^{-1}),
\end{align*}
where the polynomials $P_k(f,g)$ are defined in equations \eqref{eq:Pdefi}, the coefficients $T_{ij}^\diamond$, $1\leq i< j\leq 4$, are given in equations \eqref{eq:Tcoefq0}, and the constant $u$ is given by
\begin{align*}
    u=&-t\,\kappa_0^{-2}\kappa_\infty^4(\kappa_0-1)^2(\kappa_t^2-1)(\kappa_1^2-1)(\kappa_\infty^2-1)^2\\
    &\cdot (\kappa_t\kappa_1t-1)
    (\kappa_t^{-1}\kappa_1t-1)
    (\kappa_t\kappa_1^{-1}t-1)
    (\kappa_t^{-1}\kappa_1^{-1}t-1).
\end{align*}
\end{theorem}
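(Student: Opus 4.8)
The statement has two essentially independent halves: an analytic assertion (that $\mathrm{RH}_t(f,g)$ expands as a convergent power series in $q$ with coefficients analytic in $(f,g)\in\mathfrak X_t$) and an algebraic assertion (the explicit formula for the leading term and its being an isomorphism onto $\mathcal F_t^\diamond$). I would prove them separately. For the analytic half, note from \eqref{eq:eta_defi} that each coordinate $\eta_{ij}$ of $\mathrm{RH}_t(f,g)$ is a fixed rational expression in the Tyurin parameters $\rho_k=\pi(C(x_k))$ and the theta-coefficients $T_{ij},\widehat T_{ij}$, with denominator $\theta_q(\kappa_0,\kappa_0^{-1})\widehat T(\rho)$. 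Proposition \ref{prop:crystal_canonical_sol} makes $\Psi_0(z)^{-1}$ and $\Psi_\infty(z)$, hence $C(z)=\Psi_0(z)^{-1}\Psi_\infty(z)$ at the fixed points $z=x_k\in\mathbb C^*$, analytic in $q$ on a disc $D_R$ about $q=0$; the Jacobi triple product does the same for $T_{ij},\widehat T_{ij}$. It then suffices to know the denominator is nonzero at $q=0$ on $\mathfrak X_t$, which is exactly inequality \eqref{eq:tyurinnonzero} (valid away from $E_8$) together with $\theta_q(\kappa_0,\kappa_0^{-1})\to(1-\kappa_0)(1-\kappa_0^{-1})\neq0$ from \eqref{eq:non_rescrystal}. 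Convergence of \eqref{eq:RHexpansion} and analyticity of the coefficients $R_k(\,\cdot\,;t)$ on $\mathfrak X_t$ then follow from joint analyticity in $(f,g,q)$, obtained by running the locally uniform estimates in the proof of Proposition \ref{prop:crystal_canonical_sol} in a $w$-rescaled gauge near each of $E_1,\dots,E_7$ (as in the proof of Lemma \ref{lem:iso_init_to_linear}), the $\eta$-coordinates being gauge invariant.

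For the leading term I would simply substitute. Since $\pi[\,\cdot\,]$ is unchanged under left multiplication, \eqref{eq:connectioncrytallimat0} gives $\rho_k^\diamond=\pi[A(x_k)]$ with the explicit values \eqref{eq:rhocrystal}, while $T_{ij}\to T_{ij}^\diamond$ as in \eqref{eq:Tcoefq0} and $\theta_q(\kappa_0,\kappa_0^{-1})\to(1-\kappa_0)(1-\kappa_0^{-1})$. Feeding these into \eqref{eq:eta_defi}, the auxiliary variable $w$ and the factors $fg(\kappa_\infty^2-1)$ common to all four $\rho_k^\diamond$ cancel between numerator and denominator. The entire computation then collapses to one polynomial identity: after clearing its common denominator $\big(wfg(\kappa_\infty^2-1)\big)^2(f-\kappa_1)(f-\kappa_1^{-1})$, the quantity $\widehat T^\diamond(\rho^\diamond)$ reduces to a constant multiple of the single monomial $f^2g$, the constant being $u/\big((1-\kappa_0)(1-\kappa_0^{-1})\big)$ with $u$ as stated. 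Verifying this collapse from the explicit $P_k$ of \eqref{eq:Pdefi} — i.e. that every higher-degree term in $f$ and $g$ cancels — is the main computational obstacle; once it is in hand, the six formulas for $\eta_{ij}^\diamond$ drop out at once.

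To prove $\mathrm{RH}_t^\diamond$ is an isomorphism I would factor it as $\mathfrak X_t\xrightarrow{\ \sim\ }\mathcal A_t/\!\sim\ \longrightarrow\mathcal F_t^\diamond$, the first arrow being the isomorphism of Lemma \ref{lem:iso_init_to_linear} and the second the map $[A]\mapsto\eta^\diamond$ built from $\rho_k^\diamond=\pi[A(x_k)]$. That the image lands in $\mathcal F_t^\diamond$ is almost formal: \eqref{eq:eta_equationsc} and \eqref{eq:eta_equationsd} are identities forced by the definitions \eqref{eq:b1b2defi} of $b_1,b_2$, equation \eqref{eq:eta_equationsb} is the normalisation built into \eqref{eq:eta_defi}, and \eqref{eq:eta_equationsa} is equivalent to the Tyurin equation \eqref{eq:tyurineq}, which persists at $q=0$. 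Working in homogeneous $\rho$-coordinates (so that an $A(x_k)$ with vanishing second column is permitted), the second arrow is regular wherever $A(x_k)$ has rank one and the denominator is nonzero — the latter again by \eqref{eq:tyurinnonzero} — so the apparent poles of the explicit $(f,g)$-formula along $E_1,\dots,E_6$, where $f$ or $g$ degenerates, are cancelled by vanishing of the relevant $P_k$ and become regular after pulling back through Lemma \ref{lem:iso_init_to_linear}. Finally I would exhibit the inverse explicitly: from $\eta^\diamond$ one recovers the products $\rho_i^\diamond\rho_j^\diamond\propto\eta_{ij}^\diamond/T_{ij}^\diamond$, hence the class $[\rho^\diamond]\in(\mathbb P^1)^4/\mathbb C^*$; one reconstructs $[A]\in\mathcal A_t/\!\sim$ by solving the linear conditions $A_{11}(x_k)-\rho_k^\diamond A_{12}(x_k)=A_{21}(x_k)-\rho_k^\diamond A_{22}(x_k)=0$ subject to \eqref{eq:detA} and \eqref{eq:diagonal}, and applies the rational inverse \eqref{eq:rational_inverse} to return to $\mathfrak X_t$. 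The genuine difficulty in this last step, exactly as in Lemma \ref{lem:iso_init_to_linear}, is to check that this rational inverse is regular everywhere on $\mathcal F_t^\diamond$, which I would verify in the local coordinates \eqref{eq:exceptional_para} using the non-vanishing of $b_1^\diamond,b_2^\diamond$ and of \eqref{eq:tyurinnonzero} guaranteed by \eqref{eq:non_rescrystal} and \eqref{eq:intro_irreducibleparametercrystal}; regular forward and inverse maps then yield the claimed isomorphism of algebraic varieties.
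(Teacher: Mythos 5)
Your proposal is sound and reaches the theorem, but it diverges from the paper's proof at two structural points, in each case trading a soft argument for a computational one. (i) For the expansion \eqref{eq:RHexpansion}, the paper does not re-run the estimates of Proposition \ref{prop:crystal_canonical_sol} with $(f,g)$ as parameters: it observes that $\mathrm{RH}_t(f,g)$ is analytic in $q\in D_R\setminus\{0\}$ with a removable singularity at $q=0$ (exactly your argument via \eqref{eq:rhocrystal} and \eqref{eq:tyurinnonzero}), that it is analytic in $(f,g)$ for each fixed $q\neq 0$ by Proposition \ref{prop:rh}, and then invokes Hartogs' theorem to get joint analyticity on $\mathfrak{X}_t\times D_R$; your plan of pushing locally uniform estimates through a $w$-rescaled gauge near the exceptional curves should also work, but Hartogs makes that labour unnecessary. (ii) For the isomorphism, the paper proves bijectivity by factoring $\mathrm{RH}_t^\diamond$ through $\mathcal{A}_t/\hspace{-0.5mm}\sim\;\rightarrow\;\mathcal{M}_t^\diamond\rightarrow\mathcal{S}_t^\diamond\rightarrow\mathcal{F}_t^\diamond$ and citing the arguments of \cite[Theorems 2.15 and 2.20]{jr_qp6} for the last two arrows, constructs a rational inverse by a Mano decomposition $C(z)=C^i(z)C^e(z)$ --- where rationality is not automatic and requires the invariance of the reconstructed $A(z)$ under the involution $\lambda\mapsto\lambda^{-1}$ --- and then concludes with Zariski's main theorem: a regular, bijective map with a rational inverse onto a normal variety (smoothness of $\mathcal{F}_t^\diamond$ being Proposition \ref{prop:algebraic_characterisation}) is an isomorphism. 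Your inverse --- recovering $[\rho^\diamond]$ from the ratios $\eta_{ij}^\diamond/T_{ij}^\diamond$, then $[A]$ by Cramer's rule, then \eqref{eq:rational_inverse} --- is arguably more elementary, being manifestly rational; but the price is the step you yourself flag as the genuine difficulty, namely verifying regularity of this inverse at \emph{every} point of $\mathcal{F}_t^\diamond$, including the loci where some $\eta_{ij}^\diamond=0$ or some $\rho_k^\diamond\in\{0,\infty\}$, and that case analysis is precisely what the Zariski argument is designed to eliminate. Be aware that you cannot graft Zariski's theorem onto your construction for free: it needs global bijectivity, so you would still have to show that your linear-algebra reconstruction of $[A]$ from $[\rho]$ succeeds for every (not merely generic) point of $\mathcal{S}_t^\diamond$ --- in effect, that the remaining trace condition from \eqref{eq:diagonal} cuts out exactly the Tyurin relation \eqref{eq:tyurineq} at $q=0$, the analogue of \cite[Theorem 2.15]{jr_qp6} --- yet this is still substantially lighter than everywhere-regularity of the composed inverse. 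Your reduction of the explicit formulas to the single identity that $\widehat{T}^\diamond(\rho^\diamond)$, after clearing the common denominator, collapses to a constant multiple of $f^2g$ is correct and matches the (likewise unprinted) verification behind the paper's statement.
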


\begin{proof}
We start by fixing a real number $0<R<\min(U)$, where $U$ is the finite set
\begin{align*}
U=&\{|\kappa_0|^2,|\kappa_0|^{-2},|\kappa_t|^2,|\kappa_t|^{-2},|\kappa_1|^2,|\kappa_1|^{-2},|\kappa_\infty|^2,|\kappa_\infty|^{-2}\}\\
&\cup \left\{\left|t^{\epsilon}\kappa_t^{\epsilon_t}\,\kappa_1^{\epsilon_1}\right|:\epsilon,\epsilon_t,\epsilon_1\in\{\pm 1\}\right\}\\
&\cup \left\{\left|\kappa_0^{\epsilon_0}\,\kappa_t^{\epsilon_t}\,\kappa_1^{\epsilon_1}\,\kappa_\infty^{\epsilon_\infty}\right|:\epsilon_0,\epsilon_t,\epsilon_1,\epsilon_\infty\in\{\pm 1\}\right\}\\
&\cup \left\{\left|t^{\epsilon}\kappa_0^{\epsilon_0}\,\kappa_\infty^{\epsilon_\infty}\right|:\epsilon,\epsilon_0,\epsilon_\infty\in\{\pm 1\}\right\}.
\end{align*}
Due to assumptions \eqref{eq:non_rescrystal} and \eqref{eq:intro_irreducibleparametercrystal}, it follows that, for any $q\in\mathbb{C}$ with $0<|q|\leq R$, all the non-resonance conditions \eqref{eq:non_res} and non-splitting conditions \eqref{eq:intro_irreducibleparameter} are satisfied.  In particular, this means that, for any $(f,g)\in\mathfrak{X}_t$, $\mathrm{RH}_t(f,g)$ is well-defined for all $0<|q|\leq R$. By Proposition \ref{prop:rh}, we further know that $\mathrm{RH}_t(f,g)$ is analytic in $(f,g)\in\mathfrak{X}_t$ for fixed such $q$.

To study $\mathrm{RH}_t(f,g)$ in the limit $q\rightarrow 0$, we consider the different parts of its construction in the previous sections. Firstly, we apply Lemma \ref{lem:iso_init_to_linear}, and fix a representative coefficient matrix $A(z)\in\mathcal{A}_t$ of the image of $(f,g)$ under the mapping
  \eqref{eq:parametrisationmap}.

In Proposition \ref{prop:crystal_canonical_sol}, we derived analytic properties and asymptotic expansions as $q\rightarrow 0$ of the matrix functions $\Psi_0(z)$ and $\Psi_\infty(z)$, corresponding to the canonical solutions \eqref{eq:linear_sys_solutions} of
\begin{equation*}
    Y(qz)=A(z)Y(z).
\end{equation*}
For the corresponding connection matrix,
\begin{equation*}
    C(z)=\Psi_0(z)^{-1}\Psi_\infty(z),
\end{equation*}
these imply that $C(z)$ is an analytic function in $(z,q)\in\mathbb{C}^*\times D_R$, where $D_R:=\{q\in\mathbb{C}:|q|<R\}$. 

Recalling the notation in equation \eqref{eq:intro_xnotation}, this means that, for $1\leq k\leq 4$,
$C_{11}(x_k)$ and $C_{12}(x_k)$ are both analytic functions in $q$ on $D_R$. Thus their ratio
\begin{equation*}
\rho_k=\frac{C_{11}(x_k)}{C_{12}(x_k)},
\end{equation*}
is an analytic function 
\begin{equation*}
    \rho_k:D_R\rightarrow\mathbb{P}^1,
\end{equation*}
for $1\leq k\leq 4$.
Since we already know that the corresponding $\eta$-coordinates, $\eta=\mathrm{RH}_t(f,g)$, given in \eqref{eq:eta_defi}, are well-defined for all $0<|q|\leq R$, we therefore obtain that 
\begin{equation}\label{eq:mapF}
  D_R\setminus\{0\} \rightarrow \mathbb{C}^6,q\mapsto \eta,
\end{equation}
is an analytic map.

We now turn to the point $q=0$. The explicit values of the Tyurin parameters at $q=0$, are given in equation
\eqref{eq:rhocrystal}. The corresponding values of the $\eta$-coordinates at $q=0$ follow directly from these, and are given in the theorem. By direct inspection, one sees that these formulas are regular in $(f,g)$ on the whole of $\mathfrak{X}_t$. Thus $q=0$ is an apparent singularity of the mapping \eqref{eq:mapF}. That is, $\eta$ is analytic on the whole of $D_R$. 

This means that $\mathrm{RH}_t(f,g)=\eta$ is analytic in $q\in D_R$ for any fixed $(f,g)\in \mathfrak{X}_t$.
 On the other hand, for fixed $q\in D_R$, equal to zero or not, we also know that $\mathrm{RH}_t(f,g)$ is analytic in $(f,g)\in \mathfrak{X}_t$. By Hartogs' theorem, $\mathrm{RH}_t(f,g)$ is therefore analytic in $((f,g),q)$ on the whole of $\mathfrak{X}_t\times D_R$. In particular, it admits power series expansion \eqref{eq:RHexpansion} around $q=0$, that converges for $|q|<R$, for some coefficients $R_k(f,g;t)$, $k\geq 1$, that are analytic in $(f,g)\in\mathfrak{X}_t$.

What is left to prove, is that the leading order term $\mathrm{RH}_t^\diamond$ defines an isomorphism from $\mathfrak{X}_t$ to $\mathcal{F}_t^\diamond$. Firstly, since the coefficients in equations \eqref{eq:eta_equations}, which define $\mathcal{F}_t$, are analytic in $q$ at $q=0$, and we have shown that $\mathrm{RH}_t(f,g)$ is analytic in $q$ at $q=0$, it follows that $\mathrm{RH}_t^\diamond(f,g)\in \mathcal{F}_t^\diamond$, for all $(f,g)\in\mathfrak{X}_t$. Thus, $\mathrm{RH}_t^\diamond$ indeed defines a mapping from $\mathfrak{X}_t$ to $\mathcal{F}_t^\diamond$. We note that one can also check this directly using the explicit formulas for $\mathrm{RH}_t^\diamond$ in the theorem.

Furthermore, since $\mathrm{RH}_t(f,g)$ is analytic in $((f,g),q)$ on the whole of $\mathfrak{X}_t\times D_R$, it follows in particular that $\mathrm{RH}_t^\diamond$ is a regular map from $\mathfrak{X}_t$ to $\mathcal{F}_t^\diamond$. Again, one can also check this directly using the explicit formulas for $\mathrm{RH}_t^\diamond$ in the theorem.

Next, we show that $\mathrm{RH}_t^\diamond$ is a bijective mapping. To this end, note that $\mathrm{RH}_t^\diamond$ is given by the composition of the following four mappings,
\begin{align}
  &&   &\mathfrak{X}_t    \,\rightarrow\, \mathcal{A}_t/\hspace{-1mm}\sim, &  &(f,g) \mapsto [A(z)], &&\label{eq:map1}\\
&&     &\mathcal{A}_t/\hspace{-1mm}\sim \,\rightarrow\, \mathcal{M}_t^\diamond, & &[A(z)] \mapsto [C(z)], &&\label{eq:map2}\\
&&    &\mathcal{M}_t^\diamond \,\rightarrow\, \mathcal{S}_t^\diamond, & &[C(z)] \mapsto [\rho], &&\label{eq:map3}\\
&&    &\mathcal{S}_t^\diamond \,\rightarrow\, \mathcal{F}_t^\diamond, & &[\rho] \mapsto \eta, && \label{eq:map4}
\end{align}
where $\mathcal{S}_t^\diamond\subseteq (\mathbb{P}^1)^4/\mathbb{C}^*$ denotes the space of all $\rho\in(\mathbb{P}^1)^4$, modulo overall scalar multiplication, that satisfy \eqref{eq:tyurineq} and \eqref{eq:tyurinnonzero} with $q=0$.

It follows from Lemma \ref{lem:iso_init_to_linear} that the first mapping, \eqref{eq:map1}, is bijective.

Regarding the second mapping, \eqref{eq:map2}, we recall that any representative $A(z)$ of an equivalence class $[A(z)]\in\mathcal{A}_t/\hspace{-1mm}\sim$, is unique up to conjugation $A(z)\mapsto D_1 A(z) D_1^{-1}$ by an invertible diagonal matrix $D_1$. Such a conjugation affects the diagonalising matrix $H$ in \eqref{eq:diagonal} by left-multiplication $H\mapsto D_1 H$. Furthermore, $H$ is itself only defined up to arbitrary right-multiplication $H\mapsto H D_2$ by (invertible) diagonal matrices.
Thus the corresponding connection matrix $C(z)$ is uniquely defined up to
\begin{equation*}
C(z)=t^{-1}\kappa_0^{-\sigma_3}H^{-1}A(z)\mapsto D_2^{-1} C(z) D_1^{-1},
\end{equation*}
for arbitrary invertible diagonal matrices $D_1,D_2$. This fits the definition of $\mathcal{M}_t^\diamond$ perfectly and, in particular, \eqref{eq:map2} is a well-defined mapping. It is now elementary to check that it is furthermore a bijection.

Proving that the third mapping, \eqref{eq:map3}, is bijective is done analogously to the proof of \cite[Theorem 2.15]{jr_qp6}. Similarly, proving that the fourth mapping, \eqref{eq:map4}, is bijective is done as in the proof of \cite[Theorem 2.20]{jr_qp6}.

We conclude that $\mathrm{RH}_t^\diamond$ is a regular, bijective rational mapping from $\mathfrak{X}_t$ to $\mathcal{F}_t^\diamond$. As $\mathcal{F}_t^\diamond$ is smooth and hence normal, it follows from the ``original form" of Zariski's main theorem \cite[III,\S 9] {mumfordred} that, to establish that $\mathrm{RH}_t^\diamond$ is an isomorphism, it is enough to show that it has a rational inverse.

Thus, what remains to be done, is the rational reconstruction of a coefficient matrix $A(z)$ from coordinate-values $\eta\in \mathcal{F}_t^\diamond$. We will do this using the technology of Mano decompositions, first developed in \cite{ohyamaramissualoy}, following \cite[\S 4.3]{rofqpviasymptotics}, in the degenerate case when $q=0$.

We consider a connection matrix given as a product
\begin{equation}\label{eq:connect}
    C(z)=C^i(z) C^e(z),
\end{equation}
where the individual factors $C^i(z)$ and $C^e(z)$ are degree 1 matrix polynomials
\begin{equation*}
    C^i(z)=C_0^i+z\,C_1^i,\quad  C^e(z)=C_0^e+z\,C_1^e,
\end{equation*}
which, inspired by Definition \ref{def:monodromy_manifold_crystal}, satisfy
\begin{equation*}
    C_0^i=-t\,\kappa_0^{\sigma_3}C_1^i \lambda^{\sigma_3},\quad
    C_0^e=-\lambda^{-\sigma_3}C_1^e \kappa_\infty^{-\sigma_3},
\end{equation*}
and
\begin{equation*}
    |C^i(z)|=c_i (z-\kappa_t t)(z-\kappa_t^{-1}t),\quad 
       |C^e(z)|=c_e (z-\kappa_1)(z-\kappa_1^{-1}),
\end{equation*}
for some immaterial constants $c_i,c_e\in\mathbb{C}^*$ and a yet to be determined scalar $\lambda\in\mathbb{C}^*$.

We set
\begin{equation*}
    C^e(z)=\begin{bmatrix}
    (1-\kappa_1\,\lambda^{+1}\kappa_\infty^{-1})(1-z\,\lambda^{+1}\kappa_\infty^{+1}) &  (1-\kappa_1\,\lambda^{+1}\kappa_\infty^{+1})(1-z\,\lambda^{+1}\kappa_\infty^{-1})\\
     (1-\kappa_1\,\lambda^{-1}\kappa_\infty^{-1})(1-z\,\lambda^{-1}\kappa_\infty^{+1}) &  (1-\kappa_1\,\lambda^{-1}\kappa_\infty^{+1})(1-z\,\lambda^{-1}\kappa_\infty^{-1})
    \end{bmatrix},
\end{equation*}
so that all the above properties for $C^e(z)$ are satisfied.

We now consider the ratio of the Tyurin parameters $\rho_3$ and $\rho_4$ of $C(z)$,
\begin{align*}
\frac{\rho_3}{\rho_4}&=\frac{\pi[C(\kappa_1^{+1})]}{\pi[C(\kappa_1^{-1})]}
=\frac{\pi[C^i(\kappa_1^{+1})C^e(\kappa_1^{+1})]}{\pi[C^i(\kappa_1^{-1})C^e(\kappa_1^{-1})]}\\
&=\frac{\pi[C^e(\kappa_1^{+1})]}{\pi[C^e(\kappa_1^{-1})]}=\frac{(\lambda-\kappa_1^{+1}\kappa_\infty^{+1})(\lambda-\kappa_1^{-1}\kappa_\infty^{-1})}{(\lambda-\kappa_1^{-1}\kappa_\infty^{+1})(\lambda-\kappa_1^{+1}\kappa_\infty^{-1})},
\end{align*}
where in the third equality we used that $\pi[\cdot]$ is invariant under left-multiplication by invertible matrices.

We fix an $\eta^*\in \mathcal{F}_t^\diamond$ in sufficiently generic position and our aim is to choose $\lambda$ and $C^i(z)$ such that the $\eta$-coordinates of $C(z)$, given in \eqref{eq:connect}, equal $\eta^*$.
In particular, we require 
\begin{equation}\label{eq:p34}
    \frac{(\lambda-\kappa_1^{+1}\kappa_\infty^{+1})(\lambda-\kappa_1^{-1}\kappa_\infty^{-1})}{(\lambda-\kappa_1^{-1}\kappa_\infty^{+1})(\lambda-\kappa_1^{+1}\kappa_\infty^{-1})}=\frac{\rho_3}{\rho_4}=\frac{T_{14}^\diamond\,\eta_{13}^*}{T_{13}^\diamond\,\eta_{14}^*}.
\end{equation}
This equation has two solutions $\lambda\in\mathbb{C}^*$, related by the involution $\lambda\mapsto \lambda^{-1}$, and we fix $\lambda$ to be one of the two.

We now set
\begin{equation*}
    C^i(z)=\begin{bmatrix}
    r\,(1-\kappa_t\,\lambda^{+1}\kappa_0^{-1})(1-z\,\lambda^{-1}\kappa_0^{-1}t^{-1}) &   (1-\kappa_t\,\lambda^{-1}\kappa_0^{-1})(1-z\,\lambda^{+1}\kappa_0^{-1}t^{-1})\\
     r\,(1-\kappa_t\,\lambda^{+1}\kappa_0^{+1})(1-z\,\lambda^{-1}\kappa_0^{+1}t^{-1}) &   (1-\kappa_t\,\lambda^{-1}\kappa_0^{+1})(1-z\,\lambda^{+1}\kappa_0^{+1}t^{-1})
    \end{bmatrix},
\end{equation*}
and choose $r\in\mathbb{C}^*$ exactly such that
\begin{equation}\label{eq:p24}
 \frac{\pi[C(\kappa_t^{-1}t)]}{\pi[C(\kappa_1^{-1})]}=\frac{\rho_2}{\rho_4}=\frac{T_{14}^\diamond\,\eta_{12}^*}{T_{12}^\diamond\,\eta_{14}^*}.   
\end{equation}
A direct calculation yields the following explicit formula for $r$,
\begin{align*}
    r=-&\frac{1}{t\,\kappa_t\,\kappa_1\,\lambda^2}\cdot \frac{(\lambda-\kappa_0\,\kappa_t)(\lambda-\kappa_0^{-1}\kappa_t)(\lambda-\kappa_1\,\kappa_\infty)(\lambda-\kappa_1\,\kappa_\infty^{-1})}{(\lambda-\kappa_0\,\kappa_t^{-1})(\lambda-\kappa_0^{-1}\kappa_t^{-1})(\lambda-\kappa_1^{-1}\kappa_\infty)(\lambda-\kappa_1^{-1}\kappa_\infty^{-1})}\\
    &\cdot 
    \frac{(\lambda-t\,\kappa_t^{-1}\kappa_\infty^{-1})(\lambda-\kappa_1^{-1}\kappa_\infty)\rho_{24}-(\lambda-t\,\kappa_t^{-1}\kappa_\infty)(\lambda-\kappa_1^{-1}\kappa_\infty^{-1})}
    {(\lambda-t^{-1}\kappa_t\,\kappa_\infty)(\lambda-\kappa_1\,\kappa_\infty^{-1})\rho_{24}-(\lambda-t^{-1}\kappa_t\,\kappa_\infty^{-1})(\lambda-\kappa_1\,\kappa_\infty)},
\end{align*}
where we used short-hand notation
\begin{equation*}
    \rho_{24}:=\frac{T_{14}^\diamond\,\eta_{12}^*}{T_{12}^\diamond\,\eta_{14}^*}.
\end{equation*}

We have now constructed a connection matrix $C(z)$, given by the product \eqref{eq:connect},
that satisfies the desired properties (1)',(2)' and (3)' in Definition \ref{def:monodromy_manifold_crystal}, and whose $\eta$-coordinates satisfy, see equations \eqref{eq:p34} and \eqref{eq:p24},
\begin{equation*}
    \frac{T_{14}^\diamond\,\eta_{13}}{T_{13}^\diamond\,\eta_{14}}=\frac{\rho_3}{\rho_4}=\frac{T_{14}^\diamond\,\eta_{13}^*}{T_{13}^\diamond\,\eta_{14}^*},\qquad
  \frac{T_{14}^\diamond\,\eta_{12}}{T_{12}^\diamond\,\eta_{14}}=  \frac{\rho_2}{\rho_4}=\frac{T_{14}^\diamond\,\eta_{12}^*}{T_{12}^\diamond\,\eta_{14}^*}.
\end{equation*}
Since $\eta^*$ was chosen in generic position, the intersection of the hyperplanes $\eta_{13} \eta_{14}^*-\eta_{14} \eta_{13}^*=0$, $\eta_{12} \eta_{14}^*-\eta_{14} \eta_{12}^*=0$ and $\mathcal{F}_t^\diamond$ consists of only one point, $\eta^*$, so 
that the $\eta$-coordinates of the connection matrix must equal $\eta^*$.

The connection matrix $C(z)$, however, depends on the choice of solution $\lambda$ of the degree two algebraic equation \eqref{eq:p34}. Switching to the other solution, $\lambda\mapsto \lambda^{-1}$, transforms $C^e(z)$ and $C^i(z)$ as follows,
\begin{equation*}
    C^e(z)\mapsto \sigma_1\, C^e(z),\quad
    C^i(z)\mapsto  r^{-1}C^i(z)\,\sigma_1,\quad \sigma_1:=\begin{bmatrix}
        0 & 1\\
        1 & 0
    \end{bmatrix},
\end{equation*}
and thus transforms $C(z)$ as
\begin{equation*}
    C(z)\mapsto r^{-1}C(z).
\end{equation*}
So the entries of the coefficients of $C(z)$ are algebraic in $\eta^*$.

Next, we define a corresponding coefficient matrix $A(z)$. To this end,
write
\begin{equation*}
    C(z)=C_0+z\,C_1+z^2C_2,
\end{equation*}
and define a matrix $H$ by
\begin{equation*}
    H=t^{-1}\,\kappa_\infty^{\sigma_3} \,C_2^{-1}\kappa_0^{-\sigma_3}.
\end{equation*}
We then define a corresponding coefficient matrix by
\begin{equation*}
    A(z):=H\,t\,\kappa_0^{\sigma_3}C(z)\in \mathcal{A}_t,
\end{equation*}
where we note that $H$ was chosen such that the coefficient of $z^2$ is correctly normalised to be $\kappa_\infty^{\sigma_3}$.

Crucially, under the involution $\lambda\mapsto \lambda^{-1}$, $H$ transforms as $H\mapsto r\,H$, and thus $A(z)$ is completely invariant. That is, the coefficients of $A(z)$ are matrices of rational functions in $\eta^*\in \mathcal{F}_t^\diamond$. Through the isomorphism \eqref{eq:parametrisationmap}, see Lemma \ref{lem:iso_init_to_linear}, we thus obtain rational formulas $f=f(\eta^*)$ and $g=g(\eta^*)$, which, by construction, form a rational inverse of $\mathrm{RH}_t^\diamond$.
This shows that $\mathrm{RH}_t^\diamond$ is an isomorphism and completes the proof of the theorem.
\end{proof}

\begin{remark}\label{rem:parametrisation}
Recall that the open surface $\mathfrak{X}_t$ is constructed by removing the strict transform of the union of curves $f=0$, $f=\infty$, $g=0$, $g=\infty$ from the compact rational surface $\overline{\mathfrak{X}}_t$. The mapping $\mathrm{RH}_t^\diamond$ extends uniquely to a regular rational mapping from $\overline{\mathfrak{X}}_t$ into the smooth (projective) Segre surface $\overline{\mathcal{F}}_t^\diamond$. This extension maps the strict transforms of the curves $f=0$, $f=\infty$ and $g=\infty$ onto the singularity $N_*$ in the curve at infinity, see Proposition \ref{prop:algebraic_characterisation}. On the other hand, when restricted to the strict transform of $g=0$, $\mathrm{RH}_t^\diamond$ is given by
\begin{equation}\label{eq:param}
\begin{aligned}
    N_\infty&=0,\\
    N_{12}&=T_{12}^\diamond(f-\kappa_1)(f-\kappa_1^{-1})(f-t\,\kappa_t\,\kappa_\infty^2)(f-t\,\kappa_t^{-1}\kappa_\infty^2),\\
    N_{13}&=T_{13}^\diamond(f-t\,\kappa_t^{-1})(f-\kappa_1^{-1})(f-t\,\kappa_t\,\kappa_\infty^2)(f-\kappa_1\,\kappa_\infty^2),\\
    N_{14}&=T_{14}^\diamond(f-\kappa_1)(f-t\,\kappa_t^{-1})(f-t\,\kappa_t\,\kappa_\infty^2)(f-t\,\kappa_1^{-1}\kappa_\infty^2),\\
    N_{23}&=T_{23}^\diamond(f-t\,\kappa_t)(f-\kappa_1^{-1})(f-t\,\kappa_t^{-1}\kappa_\infty^2)(f-\kappa_1\,\kappa_\infty^2),\\
    N_{24}&=T_{24}^\diamond(f-\kappa_1)(f-t\,\kappa_t)(f-t\,\kappa_t^{-1}\kappa_\infty^2)(f-\kappa_1^{-1}\kappa_\infty^2),\\
    N_{34}&=T_{34}^\diamond(f-t\,\kappa_t)(f-t\,\kappa_t^{-1})(f-\kappa_1\kappa_\infty^2)(f-\kappa_1^{-1}\kappa_\infty^2).
\end{aligned}
\end{equation}
This defines a regular map
\begin{equation}\label{eq:rational_param}
    \{f\in\mathbb{CP}^1\}\rightarrow \overline{\mathcal{F}}_t^\diamond\setminus \mathcal{F}_t^\diamond,
\end{equation}
which is easily seen to be injective when restricted to $\mathbb{C}^*$; only the pair of points $f=0$ and $f=\infty$ have the same image, namely the singularity $N_*$. These are also the only two points mapped to $N_*$. 
The image of the mapping \eqref{eq:rational_param} is an irreducible subset of the curve at infinity. If the curve at infinity were reducible, then any of its irreducible components would have degree less than four, which contradicts the fact that one of them necessarily contains the image of the mapping \eqref{eq:rational_param}. It follows that the curve at infinity is irreducible and that the mapping \eqref{eq:rational_param} is surjective. In particular, \eqref{eq:rational_param} defines a rational parametrisation of the curve at infinity of the Segre surface. It further follows that $N_*$ is the only singularity on the curve at infinity. So, the curve at infinity is a singular, rational, irreducible, quartic curve with a unique singularity.
\end{remark}

\section{Conclusion}\label{s:conc}
In this paper, we studied the Riemann-Hilbert correspondence for $q\Psix$ in the limit $q\rightarrow 0$ and obtained three results: first, that the correspondence becomes a bi-rational mapping, second, that there is an explicit formula for the limiting mapping, and, third, that it is an isomorphism.

Although we focused on a particular $q$-difference Painlev\'e equation, we expect that similar results arise for a wide class of $q$-difference equations, including all Fuchsian $q$-difference systems such as $q$-Garnier systems \cite{sakai_garnier}. We expect that the results may also be true for irregular linear problems, such as those found in \cite{murata} for other $q$-Painlev\'e equations and their higher-order analogues. 

Whilst we only studied the leading-order term in the Riemann-Hilbert correspondence in the crystal limit, the question of the explicit determination
of later terms in the asymptotic expansion (cf equation \eqref{eq:RHexpansion}) is an interesting open question for future research. The CFT approach \cite{jimbonagoyasakai} to $q\Psix$ may be particularly relevant in this regard.

Another open question is whether there exist other scalings of $t$ and $q$ that give rise to different limits of Riemann-Hilbert problems as $q\rightarrow 0$. In particular, we note that ultra-discrete Painlev\'e equations arise in limits such as $t=e^{-T/\epsilon}$, $q=e^{-Q/\epsilon}$, $\epsilon\rightarrow 1$, with $T$ and $Q$ fixed. Investigations of the Riemann-Hilbert correspondence in such limits could also form interesting future research directions.

\begin{bibdiv}
 \begin{biblist}
 


\bib{lecaine}{article}{
   author={Le Caine, J.},
   title={{The linear $q$-difference equation of the second order}},
   journal={Amer. J. Math.},
   volume={65},
   date={1943},
   pages={585--600}
}

\bib{carmichael1912}{article}{
  author= {Carmichael,R.D.},
  title={The general theory of linear $q$-difference equations},
  journal={American Journal of Mathematics},
  volume={34},
  pages={147--168},
  year={1912}
  }


\bib{fokasitsbook}{book}{
   author={Fokas, A.S.},
   author={Its, A.R.},
   author={Kapaev, A.A.},
   author={Novokshenov, V.Y.},
   title={Painlev\'e{} transcendents},
   series={Mathematical Surveys and Monographs},
   volume={128},
   note={The Riemann-Hilbert approach},
   publisher={American Mathematical Society, Providence, RI},
   date={2006},
   pages={xii+553}
}

\bib{inaba2006}{article}{
   author={Inaba, M.A.},
   author={Iwasaki, K.},
   author={Saito, M.H.},
   title={Dynamics of the sixth Painlev\'e{} equation},
   conference={
      title={Th\'eories asymptotiques et \'equations de Painlev\'e},
   },
   book={
      series={S\'emin. Congr.},
      volume={14},
      publisher={Soc. Math. France, Paris},
   },
   isbn={978-2-85629-229-7},
   date={2006},
   pages={103--167}
}

\bib{itsnovokbook}{book}{
   author={Its, A.R.},
   author={Novokshenov, V.Y.},
   title={The isomonodromic deformation method in the theory of Painlev\'e{}
   equations},
   series={Lecture Notes in Mathematics},
   volume={1191},
   publisher={Springer-Verlag, Berlin},
   date={1986},
   pages={iv+313}
}

\bib{jimbo82}{article}{
   author={Jimbo, M.},
   title={Monodromy problem and the boundary condition for some Painlev\'e{}
   equations},
   journal={Publ. Res. Inst. Math. Sci.},
   volume={18},
   date={1982},
   number={3},
   pages={1137--1161}
}

\bib{jimbonagoyasakai}{article}{
    AUTHOR = {Jimbo, M.},
    AUTHOR = {Nagoya, H.},
    AUTHOR = {Sakai, H.},
     TITLE = {C{FT} approach to the {$q$}-{P}ainlev\'{e} {VI} equation},
   JOURNAL = {J. Integrable Syst.},
    VOLUME = {2},
      YEAR = {2017},
    NUMBER = {1},
     PAGES = {27}
} 

\bib{jimbosakai}{article}{
  author={Jimbo, M.},
  author={Sakai, H.},
  title={A $q$-analogue of the sixth {P}ainlev{\'e} equation},
  journal={ Lett. Math. Phys.},
  volume={38},
  pages={145--154},
  year={1996}
  }

\bib{jmr_segre}{article}{
  author={Joshi, N.},
author={Mazzocco, M.},
  author={Roffelsen, P.},
  title={Segre surfaces and geometry of the Painlev\'e equations},
journal={arXiv:2405.10541 [math-ph]},
    year={2024}
    }

\bib{jr_qp6}{article}{
  author={Joshi, N.},
  author={Roffelsen, P.},
  title={On the monodromy manifold of q-Painlev\'e VI and its Riemann-Hilbert problem},
  journal={Commun. Math. Phys.},
  volume={404},
  pages={97–-149},
  year={2023}
  }

\bib{kashiwara}{article}{
   author={Kashiwara, M.},
   title={Crystalizing the $q$-analogue of universal enveloping algebras},
   journal={Comm. Math. Phys.},
   volume={133},
   date={1990},
   number={2},
   pages={249--260}
}

\bib{mumfordred}{book}{
   author={Mumford, D.},
   title={The red book of varieties and schemes},
   series={Lecture Notes in Mathematics},
   volume={1358},
   publisher={Springer-Verlag, Berlin},
   date={1999}
}

\bib{murata}{article}{
   author={Murata, M.},
   title={Lax forms of the $q$-Painlev\'{e} equations},
   journal={J. Phys. A},
   volume={42},
   date={2009},
   number={11},
   pages={115201, 17}
}

\bib{ohyamaramissualoy}{article}{
    AUTHOR = {Ohyama, Y.},
    AUTHOR = {Ramis, J.P.},
    AUTHOR = {Sauloy, J.},
     TITLE = {The space of monodromy data for the {J}imbo-{S}akai family of
              {$q$}-difference equations},
   JOURNAL = {Ann. Fac. Sci. Toulouse Math. (6)},
    VOLUME = {29},
      YEAR = {2020},
    NUMBER = {5},
     PAGES = {1119--1250},
}

\bib{ormerod}{article}{
   author={Ormerod, C.M.},
   title={The lattice structure of connection preserving deformations for
   $q$-Painlev\'{e} equations I},
   journal={SIGMA Symmetry Integrability Geom. Methods Appl.},
   volume={7},
   date={2011}}

\bib{putsaito}{article}{
   author={van der Put, M.},
   author={Saito, M.H.},
   title={Moduli spaces for linear differential equations and the
   Painlev\'e{} equations},
   journal={Ann. Inst. Fourier (Grenoble)},
   volume={59},
   date={2009},
   number={7},
   pages={2611--2667}
}

 \bib{rofqpviasymptotics}{article}{
  author={Roffelsen, P.},
  title={On $q$-Painlev\'e VI and the geometry of Segre surfaces},
  journal={To appear in Nonlinearity},
  date={2024}
}

\bib{s:01}{article}{
  author={Sakai, H.},
  title={Rational surfaces associated with affine root systems
      and geometry of the {P}ainlev\'e equations},
 journal={Commun. Math. Phys.},
  volume={220},
  pages={165--229},
  date={2001}
}

\bib{sakai_garnier}{article}{
   author={Sakai, H.},
   title={A $q$-analog of the Garnier system},
   journal={Funkcial. Ekvac.},
   volume={48},
   date={2005},
   number={2},
   pages={273--297}
}

\end{biblist}
\end{bibdiv}

\end{document}